\documentclass[11pt,letterpaper]{article}

\usepackage[utf8]{inputenc}
\usepackage[english]{babel}
\usepackage{authblk}

\usepackage[dvipsnames,usenames]{xcolor}
\usepackage[colorlinks=true,pdfpagemode=UseNone,urlcolor=RoyalBlue,linkcolor=RoyalBlue,citecolor=OliveGreen,pdfstartview=FitH]{hyperref}

\usepackage{subcaption}

\usepackage[margin=1in]{geometry}
\usepackage{amsfonts,amsmath,amsthm,amssymb,thmtools}
\usepackage{bm}
\usepackage{nicefrac}
\usepackage[ruled, vlined]{algorithm2e}

\declaretheorem{theorem}
\declaretheorem[sibling=theorem]{lemma}

\declaretheorem[sibling=theorem]{observation}
\declaretheorem[sibling=theorem]{proposition}

\declaretheorem[style=remark]{remark}

\usepackage{color}
\usepackage{todonotes}
\usepackage{soul}

\usepackage{booktabs}
\usepackage{caption}

\usepackage[numbers, sort]{natbib}
\usepackage{tcolorbox}
\usepackage{enumitem}
\usepackage{tikz}

\newcommand{\abs}[1]{\left|{#1}\right|}

\newcommand{\set}[1]{\left\{{#1}\right\}}

\newcommand{\dis}{\mathsf{dist}}
\newcommand{\sep}{\mathsf{sep}}
\newcommand{\Ball}{\mathsf{Ball}}

\newcommand{\bundle}{\mathsf{b}}
\newcommand{\Bundle}{\mathsf{Bundle}}

\title{
    A Randomized Algorithm for Single-Source Shortest Path on Undirected Real-Weighted Graphs
}

\author[1]{Ran Duan \thanks{duanran@mail.tsinghua.edu.cn}}
\author[1]{Jiayi Mao \thanks{mjy22@mails.tsinghua.edu.cn}}
\author[2]{Xinkai Shu \thanks{xkshu@cs.hku.hk}}
\author[1]{Longhui Yin \thanks{ylh21@mails.tsinghua.edu.cn}}

\affil[1]{Institute for Interdisciplinary Information Sciences, Tsinghua University}
\affil[2]{The University of Hong Kong}

\begin{document}

\begin{titlepage}
    \thispagestyle{empty}
    \maketitle
    \begin{abstract}
        \thispagestyle{empty}
        In undirected graphs with real non-negative weights, we give a new randomized algorithm for the single-source shortest path (SSSP) problem with running time $O(m\sqrt{\log n \cdot \log\log n})$ in the \emph{comparison-addition} model. 
This is the first algorithm to break the $O(m+n\log n)$ time bound for real-weighted sparse graphs by Dijkstra's algorithm with Fibonacci heaps. 
Previous undirected non-negative SSSP algorithms give time bound of $O(m\alpha(m,n)+\min\{n\log n, n\log\log r\})$ in comparison-addition model, where $\alpha$ is the inverse-Ackermann function and $r$ is the ratio of the maximum-to-minimum edge weight [Pettie \& Ramachandran 2005], and linear time for integer edge weights in RAM model [Thorup 1999]. Note that there is a proposed complexity lower bound of $\Omega(m+\min\{n\log n, n\log\log r\})$ for hierarchy-based algorithms for undirected real-weighted SSSP [Pettie \& Ramachandran 2005], but our algorithm does not obey the properties required for that lower bound. As a non-hierarchy-based approach, our algorithm shows great advantage with much simpler structure, and is much easier to implement.

    \end{abstract}
\end{titlepage}

\section{Introduction}
\label{sec:introduction}

Shortest path is one of the most fundamental problems in graph theory, and its algorithms lie at the core of graph algorithm research. In a graph $G=(V,E,w)$ with $m=|E|$, $n=|V|$ and non-negative edge weight $w:E\rightarrow \mathbb{R}_{\geq 0}$, single-source shortest path (SSSP) problem asks for the distances from a given source $s \in V$ to all other vertices. 
Dijkstra's algorithm \cite{Dij59} computes the distances $\dis(s,u)$ by dynamic programming. For each vertex $u$, it maintains a temporal distance $d(u)$, which represents the shortest path from $s$ to $u$ only passing through the vertices in current $S$, where $S$ is the set of \emph{visited} vertices during algorithm process. In each round of iteration it selects vertex $u$ with the smallest $d(u)$ from the \emph{unvisited} nodes. Finally when $S=V$, $d(u)=\dis(s,u)$ for all vertex $u$.
Advanced data structures with amortized $O(1)$ time for insertion and decrease-key, and $O(\log n)$ for extract-min, called Fibonacci heap \cite{FT87} and relaxed heap \cite{DJG88}, make the time bound for Dijkstra's algorithm to $O(m+n\log n)$.
This time bound is in the \emph{comparison-addition} model where only comparison and addition operations on edge weights are allowed and considered as unit-time operations, which is the most common model for real number inputs. 
For undirected graphs, \citet*{PR05} proposed an SSSP algorithm with running time $O(m\alpha(m,n)+\min\{n\log n, n\log\log r\})$ in the comparison-addition model, where $\alpha$ is the inverse-Ackermann function and $r$ bounds the ratio of any two edge weights. However, no SSSP algorithm faster than $O(m+n\log n)$ has been found for real-weighted graphs without ratio constraints, both for undirected and directed graphs.

A byproduct of Dijkstra's algorithm is the sorting of all vertices by their distances from $s$, but the lower bound of $\Omega(n\log n)$ lies for comparison-based sorting algorithms. Researchers used to believe that this sorting bottleneck existed for many graph problems,  
and breaking this bottleneck is an important and interesting direction. Yao \cite{Yao75} gave a minimum spanning tree (MST) algorithm with running time $O(m\log\log n)$, citing an unpublished result of $O(m\sqrt{\log n})$ by Tarjan. The current best results for MST are the randomized linear time algorithm~\cite{KKT95}, the deterministic $O(m\alpha(m,n))$-time algorithm~\cite{Cha00}, and a deterministic algorithm with proven optimal (but unknown) complexity~\cite{PR02}. In the bottleneck path problem, we want to find the path maximizing the minimum edge weight on it between two vertices. \citet*{GT88} gave an $O(m\log^* n)$-time algorithm for s-t bottleneck path problem in directed graphs, which was later improved to randomized $O(m\beta(m,n))$ time~\cite{CKTZZ}. For single-source all-destination bottleneck path problem in directed graphs, there is a recent result of $O(m\sqrt{\log n})$-time randomized algorithm by \citet{DLWX}. For single-source nondecreasing path problem, Virginia V.Williams~\cite{VW10} proposed an algorithm with time bound $O(m\log\log n)$. All the results above are comparison-based though, the techniques in these works, such as local construction or divide-and-conquer approach, hardly works for the shortest path problem. Therefore it remains how to break the sorting bottleneck for SSSP.

\subsection{Our Results}
In this paper we propose the first SSSP algorithm for undirected real-weighted graphs that breaks the sorting bottleneck.

\begin{theorem}
\label{thm:breaking-sorting-bottleneck}
In an undirected graph $G=(V,E,w)$ with nonnegative edge weights $w:E\rightarrow\mathbb{R}_{\geq 0}$, there is a comparison-addition based Las-Vegas randomized algorithm that solves the single-source shortest path problem in $O(m\sqrt{\log n\cdot\log\log n})$ time, in which the results are always correct and it can achieve this time bound with high probability. The time complexity can be improved to $O(\sqrt{mn\log n}+n\sqrt{\log n\log\log n})$ when $m=\omega(n)$ and $m=o(n\log n)$.
\end{theorem}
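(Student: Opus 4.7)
The starting observation is that Dijkstra's $\Theta(n\log n)$ term stems from performing $n$ extract-min operations on a heap whose size can reach $n$. To beat this, my plan is to replace the vertex-level priority queue by one that tracks only a much smaller \emph{core} set $P\subseteq V$ of bundle centres, and to piggyback every non-centre onto its bundle so that the cost of a single heap operation is amortised over $\Theta(k)$ vertices at once. The preamble macros $\Ball$, $\sep$, $\close$, $\bundle$, $\Bundle$ already hint at this shape. Concretely, I would sample $\abs{P}=\Theta(n/k)$ centres for a parameter $k$; for each $v\notin P$ declare $v$ to be $\close$ to the nearest centre inside a local ball $\Ball(v,\delta)$ of carefully chosen radius; and set $\Bundle(c)=\set{v:\bundle(v)=c}$. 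A well-chosen $\delta$ should, with high probability, leave every vertex close to some centre while the total bundle mass remains $O(n)$.

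With bundles fixed I would run a Dijkstra-style outer loop whose heap keys are tentative source-distances of centres only. When a centre $c$ is extracted, $\dis(s,c)$ is certified in the usual way and $\Bundle(c)$ is batch-processed: each $v\in\Bundle(c)$ inherits a candidate distance $\dis(s,c)+\dis(c,v)$, and all outgoing edges of $\Bundle(c)$ are relaxed, updating the tentative distances of neighbouring centres. The heap now contributes only $O((n/k)\log n)$, while the edge-relaxation term becomes $O(m\cdot g(k))$ for some slowly growing $g$ that depends on the secondary structure used to pass messages through bundles. Choosing $k$ to balance the two terms---and I expect the right order to be $k=2^{\Theta(\sqrt{\log n\cdot\log\log n})}$---yields the $O(m\sqrt{\log n\log\log n})$ bound in Theorem~\ref{thm:breaking-sorting-bottleneck}.

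The main technical obstacle will be correctness of the batch settlement, since the true shortest path to some $v\in\Bundle(c)$ need not pass through $c$ last and might even enter the bundle from an already-settled neighbouring one. The invariant I would try to maintain is $\sep(P)\gg\delta$, so that any alternate entry route would have to approach \emph{some} centre $c'$ with $\dis(s,c')<\dis(s,c)$; by the time $c$ is extracted, $c'$ has already relaxed the edge that updates $v$ along the alternate route. Making this rigorous---and handling bundles whose radius is only stochastically bounded by $\delta$---is where Las Vegas randomisation via random sampling of $P$ together with an amortised charging scheme would enter, and this is the step I expect to demand the bulk of the work.

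For the sparse-regime claim $n=o(m)=o(n\log n)$, I would re-tune $k$ with the density: minimising the heap term $(n/k)\log n$ against the edge term $m\cdot g(k)$ under the constraint $m\ll n\log n$ should give $k\approx\sqrt{n\log n/m}$, producing the $\sqrt{mn\log n}$ leading cost, while the additive $n\sqrt{\log n\log\log n}$ absorbs the per-vertex bundle-construction work that must be paid regardless of~$m$.
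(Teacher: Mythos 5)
Your high-level skeleton matches the paper's: sample a small set of centres, put only centres in the Fibonacci heap, bundle every other vertex to its nearest centre, and batch-settle a bundle when its centre is popped. But the step you correctly identify as the crux --- settling $v\in\Bundle(c)$ exactly when the shortest path to $v$ does not arrive via already-settled territory --- is exactly where your proposed mechanism fails. Your fix assumes that any alternate route into the bundle ``would have to approach some centre $c'$ with $\dis(s,c')<\dis(s,c)$,'' i.e.\ that the offending intermediate vertices are bundled to centres settled before $c$. That is precisely the case that is already easy (it is handled by relaxing out of settled bundles), and it is \emph{not} the general situation: the shortest path to $v$ can pass through a vertex $y$ with $\dis(s,\bundle(y))\geq\dis(s,c)$, so $\bundle(y)$ is still in the heap when $c$ is popped and has relaxed nothing. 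The paper's Lemma~\ref{lemma:R-vertex} rules this out only for paths ending at centres, not at bundled vertices. The missing idea is the paper's Lemma~\ref{lemma:other}: with per-vertex balls defined adaptively as $\Ball(v)=\set{w:\dis(v,w)<\dis(v,\bundle(v))}$ (not balls of a fixed radius $\delta$), undirectedness and the triangle inequality force the offending suffix of the path to contain an edge $(z_1,z_2)$ with $z_1\in\Ball(y)\cup\{y\}$ and $z_2\in\Ball(v)\cup\{v\}$, so $v$ can be settled by relaxing from neighbours of its own ball, using a value $d(z_1)$ that was already made exact when the last ``settled'' vertex on the path was processed. Without this ball-to-ball relaxation your batch settlement is simply incorrect; moreover the invariant $\sep(P)\gg\delta$ you want cannot be enforced by random sampling of centres (two sampled centres may be arbitrarily close), and no such separation property is needed or used in the paper --- correctness holds for every choice of $R$.

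Two further quantitative gaps. First, your guess $k=2^{\Theta(\sqrt{\log n\cdot\log\log n})}$ presumes a per-edge bundle cost that is only logarithmic in $k$; in the actual scheme each non-centre vertex must build and later scan a ball of expected size $\Theta(k)$ (construction cost $\Theta(k\log k)$ per vertex via truncated Dijkstra, on a graph first reduced to constant degree), so the balance is $\frac{m}{k}\log n$ against $mk\log k$ and the right choice is $k=\sqrt{\log n/\log\log n}$, polynomially small in $\log n$ rather than quasi-polynomially large. Second, the ``with high probability'' claim needs more than sampling: ball sizes are dependent geometric variables, and the paper caps each truncated Dijkstra at $k\log k$ pops (forcing overflowing vertices into $R$) and proves concentration via a bounded-dependency Hoeffding argument (Lemma~\ref{lem:bounded-covariance-sum-distribution}); a plain expectation bound, which is all your sketch supports, gives only an expected-time Las Vegas guarantee. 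Your re-tuning $k\approx\sqrt{n\log n/m}$ for the regime $m=o(n\log n)$ does agree with the paper, but it only pays off after splitting vertices to degree $O(m/n)$ and redoing the cost accounting as in the paper's discussion section.
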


Note that there is a (worst-case) lower bound of $\Omega(m+\min\{n\log n, n\log\log r\})$ in~\cite{PR05} for ``hierarchy-based'' algorithms for undirected real-weighted SSSP, but our algorithm is randomized and not hierarchy-based. See Remark~\ref{rmk:lower-bound} for discussions.

\paragraph{Technical Overview.}

The bottleneck of Dijkstra-based algorithm is the priority queue. For this reason, we only add a fraction of vertices into the priority queue. As in many works on distance oracles or spanners, we sample a subset of vertices $R$, and the heap is only for vertices in $R$, then we ``bundle'' every other vertex $v$ to its nearest vertex in $R$, which is called $\bundle(v)$. Then define $\Ball(v)$ to be the set of vertices closer than $\bundle(v)$ to $v$. Since the algorithm doesn't know the correct order of most vertices on a shortest path, relaxing only neighbours as in Dijkstra's algorithm doesn't work. So when popping a vertex $u\in R$ from the heap, we also deal with vertices $v$ which are bundled to $u$. In an undirected graph, this also implies that $|\dis(s,u) - \dis(s,v)|$ is not large. Here we relax $v$ from vertices in $\Ball(v)$ and their neighbors, then from $v$ we relax neighbors of $v$ and vertices in their balls. (To make it easier to describe, we first change the graph to a constant-degree graph with $O(m)$ vertices.) Details of the algorithm will be discussed in Section \ref{sec:algorithm}, as well as the analysis of correctness and running time. The detailed construction of bundles so that the algorithm can achieve the time bound w.h.p. will be introduced in Section \ref{sec:improved-construction-of-bundles}. The improvement of time complexity by relaxing the constant-degree constraint will be discussed in Section~\ref{sec:disussion}.

\subsection{Other Related Works}

Existence of algorithms better than $O(m+n\log n)$ for real-weighted SSSP has been open for long. Pettie and Ramachandran's algorithm~\cite{PR05} works better than $O(n\log n)$ if the ratio between maximum and minimum edge weights is not very large. For the integer-weighted case, random access machine (RAM) model is usually adopted, where multiplications, shifts and Boolean operations on edge weights are allowed. There are many works on improving heaps and SSSP algorithms on RAM model with integer weights~\cite{FW93,FW94, Thorup96, Raman96, Raman97, TM00, HT20}.
Finally Thorup gave a linear-time algorithm for undirected graphs~\cite{Thorup00} and $O(m+n\log\log\min\{n,C\})$ for directed graphs~\cite{Thorup04} where $C$ is the maximum edge weight. 
Recently, almost linear time $O(m^{1+o(1)}\log C)$ algorithms for SSSP with negative weights are also discovered~\cite{flow,BNW22}.

\emph{All-pair shortest path (APSP)} problem requires the shortest path between every pair of vertices $u,v$ in the graph $G$. We can run Dijkstra's algorithm~\cite{Dij59} from all vertices which will have running time $O(mn+n^2\log n)$, or use Floyd-Warshall algorithm~\cite{RF62,WS62} with running time $O(n^3)$. Researchers have made many improvements since then~\cite{FM76,WD90,ZU04,CT08,HT12}, but there is still no truly subcubic time ($O(n^{3-\epsilon})$ for some constant $\epsilon>0$) APSP algorithm for real-weighted graphs or even graphs with integer weights in $[0,n]$. Williams~\cite{Williams14} gave an APSP algorithm with running time $n^3/2^{\Theta(\sqrt{\log n})}$ for real-weighted graphs. For undirected real-weighted graphs, Pettie and Ramachandran's APSP algorithm~\cite{PR05} runs in $O(mn\log\alpha(m,n))$ time, and for directed real-weighted graphs, Pettie~\cite{Pettie04} gave an APSP algorithm in $O(mn+n^2\log\log n)$ time.

\section{Preliminaries}
\label{sec:preliminaries}

In this paper we work on an undirected graph $G=(V, E, w)$ with vertex set $V$, edge set $E\subseteq V^2$ and non-negative weight function $w: E \to \mathbb{R}_{\geq 0}$, also denoted by $w_{uv}$. In an undirected graph $w_{uv} = w_{vu}$ holds for all edges $(u, v) \in E$. We denote $n = \abs{V}$, $m=\abs{E}$ as the number of vertices and edges in the graph, and $N(u) = \{v: (u, v) \in E\}$ as the set of neighbors of $u$. For two vertices $u, v\in V$, $\dis_{G}(u,v)$ is length of the shortest path connecting $u$ and $v$, namely the distance of $u$ and $v$ in graph $G$. The subscript $G$ is omitted when the context is clear. Let $s$ be the source vertex. The target of our algorithm is to find $\dis(s, v)$ for every $v\in V$. Without loss of generality we assume that $G$ is connected, so $m\geq n-1$.

\paragraph{Constant-Degree Graph.} Throughout the paper we need a graph with constant degree. To accomplish this, given a graph $G$, we construct $G'$ by a classical transformation (see~\cite{Frederickson83}):
\begin{itemize}
    \item Substitute each vertex $v$ with a cycle of $\abs{N(v)}$ vertices $x_{vw}$ ($w\in N(v)$) connected with undirected zero-weight edges, that is, for every neighbor $w$ of $v$, there is a vertex $x_{vw}$ on this cycle.
    \item For every edge $(u,v)$ in $G$, add an undirected edge between corresponding vertices $x_{uv}$ and $x_{vu}$ with weight $w_{uv}$.
\end{itemize}
We can see distance $\dis_{G'}(x_{uu'}, x_{vv'}) = \dis_G(u, v)$ for arbitrary $u'\in N(u)$ and $v'\in N(v)$. Each vertex in $G'$ has degree at most 3, while $G'$ being a graph with $O(m)$ vertices and $O(m)$ edges.

\paragraph{Comparison-Addition Model.} In this paper our algorithm works under \emph{comparison-addition} model, 
in which real numbers are subject to only comparison and addition operations. In this model, each addition and comparison takes unit time, and no other computations on edge weights are allowed. 

\paragraph{Fibonacci Heap.} Under such a model, it is possible to construct a Fibonacci heap $H$ that spends amortized $O(1)$ time for initialization, insertion, decrease-key operations, and $O(\log \abs{H})$ time for each extract-min operation~\cite{FT87}. When we extract the minimum element from the heap, we also call that element ``popped'' from the heap.




\section{Main Algorithm}
\label{sec:algorithm}

In the following sections we assume that $G$ is connected, and each vertex in $G$ has degree no more than $3$. (There are $O(m)$ vertices and $O(m)$ edges in $G$, but we still use $O(\log n)$ which is equivalent to $O(\log m)$ where $n$ is the number of vertices in the original graph without degree constraints.)


Our algorithm is based on original Dijkstra's algorithm \cite{Dij59}. Since the main bottleneck of Dijkstra's algorithm is the $O(\log n)$ time per every vertex extracted from the Fibonacci heap~\cite{FT87}, we only insert a subset $R \subseteq V$ of vertices into the heap. Each remaining vertex $v\in V \setminus R$ is \emph{bundled} to its closest vertex in $R$. Throughout the algorithm, vertices are updated only when some vertex $u\in R$ is popped from the heap. Our algorithm consists of two stages: \emph{bundle construction} and \emph{Bundle Dijkstra}, whose details will be introduced in Section \ref{sec:constructing-bundles} and \ref{sec:bundle-dijkstra} respectively.

To demonstrate the main idea of our algorithm, in this section we first give an algorithm that runs in expected $O(m\sqrt{\log n\cdot \log\log n})$ time but not ``with high probability''. In Section \ref{sec:improved-construction-of-bundles} we give an improved version of the bundle construction stage, leading to an algorithm that runs in $O(m\sqrt{\log n\cdot \log\log n})$ time with high probability. Both algorithms always give correct answers.

\subsection{Bundle Construction}
\label{sec:constructing-bundles}

A simple version of bundle construction works as follows\footnote{One may notice that sampled set, closest sampled vertex and balls are common techniques in papers on shortest path algorithms, distance oracles and spanners, and there are deterministic construction algorithms for such ``dominating set'' (e.g. \cite{ACM96}), but the extra $O(\log n)$ factor for deterministic approaches introduced on the size of dominating set or construction time is not affordable here.} ($k$ is a parameter to be determined later):
\begin{itemize}
    \item Independently sample each vertex $v\in V\setminus \{s\}$ with probability $\frac{1}{k}$ to form set $R$, then add $s$ into $R$.
    \item For each vertex $v \notin R$, run Dijkstra's algorithm started from $v$ until first vertex of $R$ is extracted from the heap, denoted by $\bundle(v)$. Therefore $\bundle(v)$ is one of the closest vertices in $R$ to $v$, i.e., $\bundle(v) \in \arg\min_{u\in R}\dis(u, v)$. We say that $v$ is \emph{bundled} to $\bundle(v)$.
    \item For each $u\in R$, let $\bundle(u) = u$, and $\Bundle(u) = \{v: u = \bundle(v)\}$ be the set of vertices bundled to $u$. By definition, $\{ \Bundle(u)\}_{u\in R}$ forms a partition of the vertex set $V$.
    \item For each vertex $v \notin R$, define $\Ball(v) = \{w\in V: \dis(v, w) < \dis(v, \bundle(v))\}$, that is, the set of vertices closer to $v$ than its bundled vertex $\bundle(v)$. In the previous Dijkstra's algorithm we can get $\Ball(v)$ and also values of $\dis(v, w)$ for all $w\in \Ball(v)\cup\{\bundle(v)\}$.
\end{itemize}

\paragraph{Time Analysis of Bundle Construction.} For each vertex $v\notin R$, without loss of generality we assume its Dijkstra's algorithm breaks tie in a deterministic way. Therefore, the order of vertices extracted from heap is fixed.

We can see $\mathbb{E}[|R|]=O(m/k)$. For each vertex $v\notin R$, let $S_v$ be the set of vertices extracted before its Dijkstra's algorithm stops, then $\Ball(v) \subsetneq S_v$. By definition of $R$, $\abs{S_v}$ follows geometric distribution with success probability $\frac{1}{k}$, thus $\mathbb{E}[\abs{S_v}]=k$ and $\mathbb{E}[\abs{\Ball(v)}] \leq k$.
By constant degree property, the number of vertices ever added into the heap is also $O(\abs{S_v})$, so the total time of the bundle construction is $O(\sum_{v\in V\setminus R} \mathbb{E}[\abs{S_v}\log \abs{S_v}])=O(mk\log k)$ in expectation.

\begin{remark}
One may argue that $x\log x$ is a convex function so that $\mathbb{E}[\abs{S_v}\log \abs{S_v}] = O(k\log k)$ does not trivially hold due to Jensen's inequality. Thanks to an anonymous reviewer, just notice that $\mathbb{E}[\abs{S_v}^2] = 2k^2 - k = O(k^2)$ and $\sqrt{x}\log{x}$ is a concave function when $x\geq 1$.

\end{remark}


\subsection{Bundle Dijkstra}
\label{sec:bundle-dijkstra}
Given the set $R$ and the partition of bundles, the main algorithm works as follows, with pseudocode given in Algorithm~\ref{alg:bundle-Dijkstra}:

Initially we set $d(s)=0$ and $d(v)=+\infty$ for all other vertex $v$, and insert all vertices of $R$ into a Fibonacci heap~\cite{FT87}. Whenever we pop a vertex $u\in R$ from the heap, we update the distances by the following steps. (Here relaxing a vertex $v$ by a value $D$ means that we update $d(v)$ by $\min\{d(v), D\}$.)
\begin{enumerate}
    \item For every vertex $v$ bundled to $u$, we need to find the exact value of $\dis(s,v)$. First relax $v$ by $d(u)+\dis(u,v)$; then for every vertex $y\in\Ball(v)$, relax $v$ by $d(y)+\dis(y,v)$; and for every $z_2 \in \Ball(v)\cup\{v\}$ and $z_1\in N(z_2)$, relax $v$ by $d(z_1)+w_{z_1,z_2}+\dis(z_2,v)$. That is, we update $d(v)$ by its bundled vertex $u$, vertices in $\Ball(v)$, and vertices neighboring to $v$ and $\Ball(v)$.
    \item After updating $d(x)$ for every $x\in\Bundle(u)$, we update the vertices $y\in N(x)$ and vertices $z_1\in\Ball(y)$. That is, relaxing $y$ by $d(x)+w_{x,y}$ for all $y\in N(x)$ and then relaxing $z_1$ by $d(x)+w_{x,y}+\dis(y,z_1)$ for all $z_1\in\Ball(y)$.
    \item Whenever we update a vertex $v\notin R$, we also relax its bundled vertex $\bundle(v)$ by $d(v)+\dis(v,\bundle(v))$. (But later we will see this is only needed in Step 2 but not Step 1, since in Step 1 $v$ is bundled to $u$, but the distance $\dis(s,u)$ is already found when popping $u$ from the heap.)
\end{enumerate}

The following observation holds naturally from the algorithm.

\begin{observation}\label{obs:d-upper-bound}
    $d(v) \geq \dis(s, v)$ always holds for all $v \in V$.
\end{observation}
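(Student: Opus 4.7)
The plan is a straightforward induction over the sequence of relaxations performed by Algorithm~\ref{alg:bundle-Dijkstra}, using the invariant that every value ever compared against $d(v)$ corresponds to the length of an actual walk from $s$ to $v$ in $G$.

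First I would handle the base case: initially $d(s)=0=\dis(s,s)$ and $d(v)=+\infty\geq\dis(s,v)$ for every other $v$, so the claim holds before any relaxation. Then, since $d(v)$ can only change through an update of the form $d(v)\leftarrow\min\{d(v),D\}$, which never increases $d(v)$, it suffices to verify that $D\geq\dis(s,v)$ whenever such an update occurs.

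The main work is a case check over the relaxations listed in Steps~1--3. In every case the relaxation value has the shape $d(u)+L$, where $u$ is a vertex whose $d(u)$ is already defined and $L$ is the length of a known $u$-to-$v$ route in $G$: either a single edge weight $w_{u,v}$, a distance $\dis(u,v)$ precomputed during bundle construction, a sum such as $w_{z_1,z_2}+\dis(z_2,v)$ that appears in Step~1, or the analogous expression $w_{x,y}+\dis(y,z_1)$ from Step~2. Any such $L$ is at least $\dis(u,v)$ because it measures a genuine $u$-to-$v$ walk. Combining this with the inductive hypothesis $d(u)\geq\dis(s,u)$ and the triangle inequality then yields $D\geq\dis(s,u)+\dis(u,v)\geq\dis(s,v)$, closing the induction.

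I do not expect a genuine obstacle. The only point requiring care is being exhaustive in enumerating the relaxation shapes, and confirming that the precomputed values $\dis(y,v)$, $\dis(z_2,v)$, and $\dis(y,z_1)$ referenced by the algorithm are actual $G$-distances rather than distances in some auxiliary graph --- but this is immediate from Section~\ref{sec:constructing-bundles}, where each Dijkstra call is run on $G$ itself and the reported distances correspond to vertices already extracted from the local heap, hence are correct shortest-path lengths in $G$.
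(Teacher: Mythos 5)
Your proof is correct and matches the paper's (implicit) justification: the paper simply notes the observation ``holds naturally from the algorithm,'' the underlying reason being exactly what you spell out --- every relaxation value is $d(u)$ plus the length of a genuine $u$-to-$v$ walk built from precomputed $G$-distances and edge weights, so an induction over relaxations with the triangle inequality gives $d(v)\geq\dis(s,v)$. Your explicit case check of Steps 1--3 is a fine elaboration of that one-line argument, with no gaps.
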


\begin{algorithm}
    \caption{\textsc{BundleDijkstra}$(G,s)$}
    \label{alg:bundle-Dijkstra}
    \SetKwInOut{Input}{Input}\SetKwInOut{Output}{Output}
    \SetKwProg{Func}{function}{:}{}
    \Input{A graph $G=(V,E,w)$ and starting vertex $s\in V$}
    \Output{Distance $d(v)$ from $s$ to $v$ for every vertex $v\in V$}
    
    \nl Construct Bundles as described in Section~\ref{sec:constructing-bundles}\;
    
    \nl Set label $d(s) \gets 0$ and $d(v) \gets +\infty$ for all $v \in V\setminus \{s\}$\;
    
    \nl Initialize Fibonacci heap $H$ with all vertices of $R$ and key $d(\cdot)$\;
    
    \nl \While{$H$ is not empty}{
 
    \nl 	$u \gets H$.\textsc{ExtractMin}$()$\; \label{line:bundle-Dijkstra-extractmin}
        
    \nl     \ForEach(\tcp*[f]{Step 1}){$v \in \Bundle(u)$} {
        
    \nl         \textsc{Relax}($v$, $d(u) + \dis(u, v)$)\;\label{line:bundle-Dijkstra:update-uv}  
            
    \nl         \ForEach{$y \in \Ball(v)$}{
    
    \nl             \textsc{Relax}($v$, $d(y) + \dis(y, v)$)\; \label{line:bundle-Dijkstra-update-yv}
            }

    \nl         \ForEach{$z_2 \in \Ball(v)\cup\{v\}$}{
    
    \nl             \ForEach{$z_1 \in N(z_2)$}{
    
    \nl                 \textsc{Relax}($v$, $d(z_1) + w_{z_1, z_2} + \dis(z_2, v)$)\; \label{line:bundle-Dijkstra:update-z1v}
                }
            }
        }
        
    \nl     \ForEach(\tcp*[f]{Step 2}){$x \in \Bundle(u)$}{
            
    \nl         \ForEach{$y \in N(x)$}{

    \nl             \textsc{Relax}($y$, $d(x) + w_{x, y}$)\;\label{line:bundle-Dijkstra:update-xy}
                
    \nl             \ForEach{$z_1 \in \Ball(y)$}{
                
    \nl                 \textsc{Relax}($z_1$, $d(x) + w_{x, y} + \dis(y, z_1)$)\; \label{line:bundle-Dijkstra:update-xz1}

                        
                        
                }
            }
        }
    }

    \Func{\textsc{\textrm{Relax($v$, $D$)}}}{
         \If{$D < d(v)$}{
            $d(v)\gets D$\;
            \If{$v\in H$}{
                $H$.\textsc{DecreaseKey}($v$, $D$)}
            \ElseIf{$v\notin R$}{
                \textsc{Relax}($\bundle(v), d(v) + \dis(v, \bundle(v))$)\tcp*[f]{Step 3}
            }
        }
        
    }
\end{algorithm}

\paragraph{Time Analysis for Bundle Dijkstra.} For the Bundle Dijkstra stage, only vertices in $R$ are inserted into heap, thus the extract-min operation only takes $O(\abs{R}\log n)$ time in total. Since every vertex in $V\setminus R$ only appears once as $v$ and $x$ in Step 1 and Step 2, respectively, and by constant degree property, every vertex appears constant times as the vertex $y\in N(x)$ in Step 2. We can see the number of vertices $z_1,z_2$ in Step 1 for every $v$ is $O(|\Ball(v)|)$, and the number of vertices $z_1$ in Step 2 for every $y$ is $O(|\Ball(y)|)$. Also note that the recursive call of \textsc{Relax} in Step 3 can only recurse once since $\bundle(v)\in R$. So the total time for Step 1, 2 and 3 is $O(\sum_{v\in V\setminus R}|\Ball(v)|)$. Thus, the time of the bundle Dijkstra stage is $\mathbb{E}[O(|R|\cdot\log n+\sum_{v\in V\setminus R}\abs{\Ball(v)})] = O(\frac{m}{k}\log n + mk)$ in expectation.

Now, we can see that the expected total time of the two stages is $O(\frac{m}{k}\log n + mk\log k)$, which is minimized to $O(m\sqrt{\log n\cdot \log \log n})$ if we choose $k = \sqrt{\frac{\log n}{\log \log n}}$. We move to explain our main ideas of the correctness proof. A formal proof is given in Section~\ref{sec:correctness-analysis}.

\paragraph{Main ideas.} The following propositions hold in the algorithm. (Here the iteration of $u$ means the iteration performed when popping $u\in R$; a real distance $\dis(s,v)$ is found means $d(v)=\dis(s,v)$ already holds.)

\begin{proposition}\label{inv:R-vertex}
   When popping $u\in R$ from the heap, its distance $\dis(s,u)$ is already found.
\end{proposition}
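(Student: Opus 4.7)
My plan is to prove Proposition~\ref{inv:R-vertex} by strong induction on the order in which $R$-vertices are popped, while maintaining the stronger invariant that at every moment during execution, every $v\in V$ with $\dis(s,v)$ strictly less than the current heap minimum satisfies $d(v)=\dis(s,v)$. Given this invariant, the proposition follows instantly: at the pop of $u$, $d(u)$ equals the heap minimum, so Observation~\ref{obs:d-upper-bound} plus the invariant (applied with $v=u$) forces $d(u)=\dis(s,u)$. Initialization trivially satisfies the invariant, since the heap minimum is $0=d(s)=\dis(s,s)$.

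For the inductive step, suppose the invariant fails for the first time after the iteration of some $u$, and pick a vertex $v$ violating it with smallest $\dis(s,v)$. Fix a shortest $s$-$v$ path $P=v_0,v_1,\dots,v_\ell$, and let $c$ be the largest index with $d(v_c)=\dis(s,v_c)$ at this moment; this $c$ exists (take $v_0=s$) and $c<\ell$. If $v_c\in R$ and was popped earlier, then by the inductive hypothesis $d(v_c)=\dis(s,v_c)$ at its own pop, so Line~\ref{line:bundle-Dijkstra:update-xy} in its Step~2 already relaxed $v_{c+1}\in N(v_c)$ to $\dis(s,v_{c+1})$, contradicting the maximality of $c$. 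If $v_c\in R$ is still in the heap, then $d(v_c)\geq$~heap~minimum contradicts $d(v_c)=\dis(s,v_c)<$~heap~minimum.

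The harder case is $v_c\notin R$. For this I would extract a sub-lemma: \emph{whenever $w\in R$ is popped with $d(w)=\dis(s,w)$ and the invariant holds at that moment, Step~1 of $w$'s iteration sets $d(x)\leq\dis(s,x)$ for every $x\in\Bundle(w)$.} To prove it, take a shortest $s$-$x$ path and let $q$ be the last vertex on it not in $\Ball(x)\cup\{x\}$; the bounds $\dis(x,q)\geq\dis(x,w)$ and $\dis(s,x)\leq\dis(s,w)+\dis(x,w)$ combine to give $\dis(s,q)\leq\dis(s,w)$, so by the invariant $d(q)=\dis(s,q)$. The Line~\ref{line:bundle-Dijkstra:update-z1v} relaxation with $z_1=q$ and $z_2$ the successor of $q$ on the path evaluates to exactly $\dis(s,x)$, because the suffix of the path from $z_2$ to $x$ lies entirely inside $\Ball(x)\cup\{x\}$, which was built as a shortest-path tree rooted at $x$ during bundle construction, and so $\dis(z_2,x)$ equals the suffix length. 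Applying the sub-lemma with $w=\bundle(v_c)$---and noting $\dis(s,w)\leq\dis(s,v_c)+\dis(v_c,w)\leq\dis(s,v_c)+\dis(v_c,u)=\dis(s,u)$ because $\bundle(v_c)$ is no farther from $v_c$ than $u\in R$ is, hence $w$ was popped before $u$ by the invariant---the post-Step-1 value $d(v_c)=\dis(s,v_c)$ then lets Line~\ref{line:bundle-Dijkstra:update-xy} in $w$'s Step~2 relax $v_{c+1}$ to $\dis(s,v_{c+1})$, again contradicting the maximality of $c$.

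The main obstacle I anticipate is handling ties in $\dis$-values (in particular with zero-weight edges), where several strict inequalities above degenerate to equalities---most notably $\dis(s,q)=\dis(s,w)$ inside the sub-lemma and $\dis(s,w)=\dis(s,u)$ when arguing that $w$ was popped before $u$. I plan to dispatch these by fixing a deterministic tie-breaking rule in the heap's extract order that refines $\dis(s,\cdot)$ to a total order, and correspondingly upgrading the invariant to also cover vertices whose distance equals the current heap minimum provided they come earlier in this refined order.
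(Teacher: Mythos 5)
There is a genuine gap in the main induction, in the case $v_c\notin R$. You need $w=\bundle(v_c)$ to have been popped already, and you justify this by the chain $\dis(s,w)\leq \dis(s,v_c)+\dis(v_c,w)\leq \dis(s,v_c)+\dis(v_c,u)=\dis(s,u)$; the last equality is unjustified ($v_c$ lies on a shortest path to $v$, not to $u$, and the triangle inequality only gives $\dis(s,u)\leq \dis(s,v_c)+\dis(v_c,u)$), and the conclusion itself is false in general. The last vertex $v_c$ on $P$ with a correct label need not owe its correctness to its own bundle's iteration: $d(v_c)$ may have been set by a Step-2 relaxation (Line~\ref{line:bundle-Dijkstra:update-xy} or Line~\ref{line:bundle-Dijkstra:update-xz1}) performed during the iteration of some \emph{other} popped bundle, while $\bundle(v_c)$ --- whose distance from $s$ can be as large as about $2\dis(s,v_c)$, in particular above the current heap minimum --- is still in the heap. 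In that situation neither your sub-lemma nor ``Step~2 of $w$'s iteration'' is available, so no contradiction with the maximality of $c$ is obtained. This is exactly the difficulty the paper's proof is built around: it splits the shortest path at the last vertex bundled to an \emph{already-popped} vertex (not the last vertex with a correct label), and the successor is then handled inside Step~1 of the \emph{current} iteration via the $z_1,z_2$ bridge of Lemma~\ref{lemma:other} and Line~\ref{line:bundle-Dijkstra:update-z1v}; correctness at pop time (your target), monotonicity of pops, and Proposition~\ref{inv:other} are proved jointly by induction in Lemma~\ref{lem:shortest-path-iteration}. Your simpler contradiction bypasses precisely this content, and patching it forces you back to that structure.

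The tie issue you flag at the end is also a real gap rather than a routine detail. In the sub-lemma you only obtain $\dis(s,q)\leq\dis(s,w)$, and at equality the invariant (under any refinement) can genuinely fail: whether $d(q)$ is correct at $w$'s pop is governed by whether $\bundle(q)$ has been popped or by earlier Step-2 relaxations reaching $q$, and since $q$ is typically not in the heap, no tie-breaking rule for heap extractions controls this; with zero-weight (or tied) edges one can have $\dis(s,q)=\dis(s,w)$ while $\bundle(q)$ pops after $w$ and $d(q)$ is still too large. The paper never needs a distance-threshold statement about all vertices: the correct value it feeds into Line~\ref{line:bundle-Dijkstra:update-z1v} is $d(z_1)$ for a $z_1\in\Ball(y)$ that was explicitly relaxed on Line~\ref{line:bundle-Dijkstra:update-xz1} of an earlier iteration, and ties are absorbed by the heap inequality $d(u)\geq d(u_t)$ rather than by strict distance comparisons. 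So, as it stands, your invariant-based route does not yield a proof of the proposition.
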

\begin{proposition}\label{inv:other}
   After Step 1 in the iteration of $u$, $\dis(s,v)$ for all $v\in \Bundle(u)$ are found.
\end{proposition}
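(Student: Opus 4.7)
The plan is to prove Proposition~\ref{inv:other} by strong induction, carried out simultaneously with Proposition~\ref{inv:R-vertex}, over the order in which $R$-vertices are popped from the heap. The hypothesis asserts that for every $u^* \in R$ popped strictly before $u$, (i) $d(u^*) = \dis(s, u^*)$ at pop time, and (ii) after Step~1 of $u^*$'s iteration, $d(w) = \dis(s, w)$ for every $w \in \Bundle(u^*)$; Observation~\ref{obs:d-upper-bound} and the monotonicity of relaxations guarantee these equalities persist. The base case $u = s$ is immediate.

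Fix $v \in \Bundle(u)$ with $v \neq u$ and a shortest $s$--$v$ path $P = (s = p_0, \ldots, p_k = v)$. By Observation~\ref{obs:d-upper-bound} it suffices to exhibit during Step~1 a relaxation whose right-hand side equals $\dis(s,v)$. Let $j$ be the largest index with $p_j \notin \Ball(v) \cup \{v\}$; because $s \in R$ while $\bundle(v) = u$ is the nearest $R$-vertex to $v$, $s \notin \Ball(v)$, so $j$ exists with $j \geq 0$ and $p_{j+1}, \ldots, p_k \in \Ball(v) \cup \{v\}$. Line~\ref{line:bundle-Dijkstra:update-z1v} with $z_2 = p_{j+1}$ and $z_1 = p_j \in N(z_2)$ relaxes $d(v)$ by $d(p_j) + w_{p_j, p_{j+1}} + \dis(p_{j+1}, v)$, and since $P$ is shortest the last two summands equal $\dis(p_j, v) = \dis(s, v) - \dis(s, p_j)$. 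Therefore, as soon as $d(p_j) = \dis(s, p_j)$ at the moment Step~1 runs, $d(v) \leq \dis(s, v)$, which combined with Observation~\ref{obs:d-upper-bound} gives the desired equality.

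The heart of the proof, and the main obstacle I anticipate, is certifying $d(p_j) = \dis(s, p_j)$ in time. A clean quantitative input is $\dis(s, p_j) \leq \dis(s, u)$, immediate from $\dis(s,v) = \dis(s,p_j) + \dis(p_j,v)$, $\dis(p_j,v) \geq \dis(v,u)$ (since $p_j \notin \Ball(v)$), and the triangle inequality. My plan is to upgrade this to $\dis(s, \bundle(p_j)) \leq \dis(s,u)$, whereupon IH(ii) applied to $\bundle(p_j)$ supplies the equality. I would do this by a contradiction argument mirroring the one I would use for Proposition~\ref{inv:R-vertex}: assume $\bundle(p_j)$ is still in the heap, walk $P$ backward to the largest $j^* < j$ with $\bundle(p_{j^*})$ already popped (guaranteed since $\bundle(s) = s$ goes first), use IH(ii) to get $d(p_{j^*}) = \dis(s, p_{j^*})$, then apply Step~2 of $\bundle(p_{j^*})$'s iteration at line~\ref{line:bundle-Dijkstra:update-xy} to push $d(p_{j^*+1})$ down to $\dis(s, p_{j^*+1})$, followed by the Step~3 recursive Relax to drive $d(\bundle(p_{j^*+1}))$ to at most $\dis(s, p_{j^*+1}) + \dis(p_{j^*+1}, \bundle(p_{j^*+1}))$. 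The hard part will be showing this upper bound is at most $\dis(s,u) = d(u)$, which would force $\bundle(p_{j^*+1})$ to be popped no later than $u$ and contradict the maximality of $j^*$; the naive triangle chain only yields $\dis(s,v) + \dis(v,u)$, which can strictly exceed $\dis(s,u)$, so closing this gap will require a finer argument that also exploits the Step~2 ball-relaxation at line~\ref{line:bundle-Dijkstra:update-xz1} to extend propagation by an extra ball-width per iteration and the fact that $p_{j^*+1}$ still lies outside $\Ball(v) \cup \{v\}$.
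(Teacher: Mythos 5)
The reduction in your second paragraph is sound and matches the paper's mechanism (Line~\ref{line:bundle-Dijkstra:update-z1v} applied to a boundary pair on the shortest path), but the load-bearing step --- certifying $d(p_j)=\dis(s,p_j)$ by the time Step~1 of $u$'s iteration runs --- is exactly where the proposal has a genuine gap, and the route you sketch cannot be repaired because its key claim is false. You want $\dis(s,\bundle(p_j))\leq \dis(s,u)$, so that $\bundle(p_j)$ is popped before $u$ and the induction hypothesis applies; but, as you already suspected, the only general bound is $\dis(s,\bundle(p_j))\leq \dis(s,p_j)+\dis(p_j,\bundle(p_j))\leq \dis(s,v)+\dis(v,u)$, which can strictly exceed $\dis(s,u)$. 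A concrete counterexample: vertices $s,y,v,u,r$ with edges $(s,y)$ of weight $10$, $(y,v)$ of weight $4$, $(v,u)$ of weight $3$, $(y,r)$ of weight $6$, $(s,u)$ of weight $12$, and $R=\{s,u,r\}$. Then $\bundle(v)=u$, $\bundle(y)=r$, the shortest $s$--$v$ path is $(s,y,v)$ with $\dis(s,v)=14<15=\dis(s,u)+\dis(u,v)$, $\Ball(v)=\emptyset$ so $p_j=y$, yet $\dis(s,\bundle(p_j))=\dis(s,r)=16>12=\dis(s,u)$ and $r$ is popped \emph{after} $u$. So neither your direct claim nor the backward-walk contradiction (which needs $d(\bundle(p_{j^*+1}))$ to be forced below $d(u)$) can go through; the paper explicitly notes that the shortest path to $v\in\Bundle(u)$ may pass through vertices $y$ with $\dis(s,\bundle(y))\geq\dis(s,u)$, and its proof is designed to cope with that case rather than exclude it.

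What the paper does instead: it never certifies $d$ at the $\Ball(v)$-boundary vertex $p_j$. It takes $x$ to be the last vertex on $P$ bundled to an already-popped vertex and $y$ its successor; by induction $d(x)=\dis(s,x)$, so Step~2 of that earlier iteration certifies $d(y)$ (Line~\ref{line:bundle-Dijkstra:update-xy}) and, crucially, $d(z_1)$ for every path vertex $z_1\in\Ball(y)$ (Line~\ref{line:bundle-Dijkstra:update-xz1}). Lemma~\ref{lemma:other} (case 2 in the formal proof of Lemma~\ref{lem:shortest-path-iteration}, using Properties~\ref{prop:du-correct-before} and~\ref{prop:heap-pop-increase} to get $d(\bundle(y))\geq d(u_t)$) shows that if $z_1$ is chosen as the \emph{last} path vertex inside $\Ball(y)$, then its successor $z_2$ automatically lands in $\Ball(v)\cup\{v\}$, so Line~\ref{line:bundle-Dijkstra:update-z1v} in Step~1 of $u$'s iteration sets $d(v)=\dis(s,v)$ --- with no requirement that $\bundle(y)$ or $\bundle(p_j)$ be popped early. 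Your pair $(p_j,p_{j+1})$ coincides with a certified pair only when $p_j$ happens to lie in $\{x,y\}\cup\Ball(y)$ (in the example above $d(y)$ is correct only because $y\in N(s)$); when $p_j$ lies beyond $\Ball(y)$, the algorithm provides no mechanism forcing $d(p_j)$ to be correct at that moment. Your closing hint --- using Line~\ref{line:bundle-Dijkstra:update-xz1} to ``extend propagation by an extra ball-width'' --- is indeed the missing idea, but it must be applied to the ball of $y$ (the successor of the last settled-bundle vertex), not to the boundary of $\Ball(v)$, and the quantitative heart is the inequality chain of Lemma~\ref{lemma:other}, which your proposal does not yet contain.
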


The following lemmas contain the main ideas of the algorithm.
\begin{lemma}\label{lemma:R-vertex}
    For any vertex $u\in R$ and any path $P$ from $s$ to $u$, if $P$ goes through vertex $y$, $\dis(s,\bundle(y))$ is at most the length of $P$.
\end{lemma}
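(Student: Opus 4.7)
The plan is to reduce the statement directly to the triangle inequality plus the defining property of $\bundle(y)$, namely that $\bundle(y)$ is a closest vertex of $R$ to $y$. I expect the proof to be short, with the only subtlety being a quick case split on whether $y$ itself lies in $R$.

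First I would dispose of the easy case $y \in R$: by the construction in Section~\ref{sec:constructing-bundles} we have $\bundle(y)=y$, and since $P$ contains a prefix from $s$ to $y$, clearly $\dis(s,y)\le |P|$, which is exactly $\dis(s,\bundle(y))\le |P|$.

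Next, for $y\notin R$, I would split $P$ at $y$ into a prefix $P_1$ (from $s$ to $y$) and a suffix $P_2$ (from $y$ to $u$), so that $|P|=|P_1|+|P_2|$. The key observation is that the bundle-construction procedure runs Dijkstra from $y$ until the first vertex of $R$ pops, so $\bundle(y)\in\arg\min_{r\in R}\dis(y,r)$. Since $u\in R$, this yields $\dis(y,\bundle(y))\le \dis(y,u)\le |P_2|$. Combining with the bound $\dis(s,y)\le |P_1|$ and the triangle inequality gives
\[
\dis(s,\bundle(y))\;\le\;\dis(s,y)+\dis(y,\bundle(y))\;\le\;|P_1|+|P_2|\;=\;|P|,
\]
which is exactly the desired inequality.

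There is essentially no obstacle here; the only thing to be careful about is to cite the construction so that the reader sees that $\bundle(y)$ really is a minimizer of $\dis(y,\cdot)$ over $R$ (modulo tie-breaking, which is fixed deterministically in the construction). The lemma is really just packaging the combination of "closest in $R$" and the triangle inequality into a form that will later let us compare the key of $\bundle(y)$ in the heap against shortest-path lengths through $y$, which is presumably what the next steps (Propositions~\ref{inv:R-vertex} and~\ref{inv:other}) will use.
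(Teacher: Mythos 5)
Your proof is correct and follows essentially the same route as the paper: bound $\dis(s,y)$ by the prefix of $P$, bound $\dis(y,\bundle(y))$ by the suffix using that $\bundle(y)$ minimizes distance to $R$ and $u\in R$, and combine via the triangle inequality. The explicit case split for $y\in R$ is harmless but unnecessary, since $\bundle(y)=y$ makes the same argument go through verbatim.
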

\begin{proof}
    $\dis(s, y)$ is at most the length of subpath of $P$ from $s$ to $y$. By definition of $\bundle(y)$, $\dis(y, \bundle(y))$ is at most the length of subpath of $P$ from $y$ to $u$. Concatenating two subpaths together, $\dis(s, \bundle(y)) \leq \dis(s, y) + \dis(y, \bundle(y))$ is at most the length of $P$.
\end{proof}

Lemma~\ref{lemma:R-vertex} shows that for any vertex $u\in R$, the shortest path from $s$ to $u$ only contains vertices $y$ with $\dis(s, \bundle(y)) \leq \dis(s, u)$. This is the intuition why vertices of $R$ are popped in increasing order of $\dis(s,\cdot)$. However, the shortest path from $s$ to some vertex $v\in \Bundle(u)$ may go through some vertex $y$ with $\dis(s, \bundle(y)) \geq \dis(s, u)$, that is, $\bundle(y)$ is still not popped from the heap. But surprisingly, with the ideas of Lemma~\ref{lemma:other} we can deal with this case even before the iteration of $\bundle(y)$.

\begin{lemma}\label{lemma:other}
    For a vertex $v\notin R$, if the shortest path from $s$ to $v$ is shorter than $\dis(s,\bundle(v))+\dis(\bundle(v),v)$, and it goes through a vertex $y$ (other than $v$) such that $\dis(s,\bundle(y))\geq \dis(s,\bundle(v))$, then on the shortest path from $y$ to $v$ there are two adjacent vertices $z_1,z_2$ such that $z_1\in\Ball(y)\cup\{y\}$ and $z_2\in\Ball(v)\cup\{v\}$.
\end{lemma}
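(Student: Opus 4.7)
My plan is to walk along the shortest $y$-to-$v$ subpath, writing it as $y = y_0, y_1, \ldots, y_t = v$ with $t \geq 1$ (since $y \neq v$), and to identify the desired pair as the last edge that leaves $A := \Ball(y) \cup \{y\}$ and enters $B := \Ball(v) \cup \{v\}$. Concretely, I would let $i^*$ be the largest index with $y_{i^*} \in A$, which is well-defined because $y_0 = y \in A$. The intended choice is $z_1 = y_{i^*}$ and $z_2 = y_{i^*+1}$, so the main task is showing $y_{i^*+1} \in B$ whenever $i^* < t$, plus handling the corner case $i^* = t$ separately.

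For the main case $i^* < t$, I would argue by contradiction. If $y_{i^*+1} \notin B$, then maximality of $i^*$ also forces $y_{i^*+1} \notin A$, giving $\dis(y, y_{i^*+1}) \geq \dis(y, \bundle(y))$ and $\dis(y_{i^*+1}, v) \geq \dis(v, \bundle(v))$. Summing these along the shortest $y$-to-$v$ subpath yields $\dis(y, v) \geq \dis(y, \bundle(y)) + \dis(v, \bundle(v))$. Since $y$ lies on a shortest $s$-to-$v$ path, $\dis(s, v) = \dis(s, y) + \dis(y, v)$; combining with the triangle inequality $\dis(s, \bundle(y)) \leq \dis(s, y) + \dis(y, \bundle(y))$ and the hypothesis $\dis(s, \bundle(y)) \geq \dis(s, \bundle(v))$ would give $\dis(s, v) \geq \dis(s, \bundle(v)) + \dis(\bundle(v), v)$, contradicting the assumed strict inequality on the shortest $s$-to-$v$ path.

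The corner case $i^* = t$ is handled directly: here $v = y_t \in A$ together with $v \neq y$ forces $v \in \Ball(y)$, i.e., $\dis(y, v) < \dis(y, \bundle(y))$. Any vertex $w$ on the shortest $y$-to-$v$ subpath then satisfies $\dis(y, w) \leq \dis(y, v) < \dis(y, \bundle(y))$, so $y_{t-1} \in A$, and picking $z_1 = y_{t-1}$, $z_2 = v$ works. The only conceptual pitfall, which I expect to be the main obstacle, is remembering to split off this corner case before invoking $y_{i^*+1}$; once split off, both branches reduce to a short triangle-inequality calculation, and the substantive content of the lemma is the chain of inequalities in the contradiction argument.
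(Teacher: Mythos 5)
Your proof is correct and follows essentially the same route as the paper's: you pick $z_1$ as the last vertex of the shortest $y$-to-$v$ subpath lying in $\Ball(y)\cup\{y\}$ and take $z_2$ as its successor, and your contradiction argument is just the contrapositive of the paper's direct chain $\dis(y,v)<\dis(y,\bundle(y))+\dis(\bundle(v),v)$ obtained from the triangle inequality, the hypothesis $\dis(s,\bundle(y))\geq\dis(s,\bundle(v))$, and $y$ lying on the shortest path. Your explicit split of the corner case $i^*=t$ (where $z_2=v$) matches the paper's parenthetical remarks and is handled correctly.
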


\begin{proof}
    We have $\dis(y,v)=\dis(s,v)-\dis(s,y)$ and $\dis(s,v)<\dis(s,\bundle(v))+\dis(\bundle(v),v)$. By triangle inequality, $\dis(s,y)\geq \dis(s,\bundle(y))-\dis(y,\bundle(y))$, and by $\dis(s,\bundle(y))\geq \dis(s,\bundle(v))$,
    $$\dis(y,v)<\dis(\bundle(v),v)+\dis(y,\bundle(y))+\dis(s,\bundle(v))-\dis(s,\bundle(y))\leq \dis(\bundle(v),v)+\dis(y,\bundle(y))$$
    Let $z_1$ be the last vertex on the shortest path from $y$ to $v$ satisfying $\dis(y,z_1)<\dis(y,\bundle(y))$, so $z_1\in\Ball(y)$. Then $z_2$ will be the next vertex after $z_1$, so $\dis(y,z_2)\geq\dis(y,\bundle(y))$, and $\dis(z_2,v)<\dis(v,\bundle(v))$, so $z_2\in\Ball(v)$. (If $\dis(y,\bundle(y))=0$ then $z_1=y$, and if $\dis(y,v)<\dis(y,\bundle(y))$ then $z_2=v$ and $z_1$ is the vertex before $v$.)
\end{proof}

Then we can see Proposition \ref{inv:R-vertex} and \ref{inv:other} hold throughout the algorithm iteratively: (A formal proof will be given in Section~\ref{sec:correctness-analysis}.)
\begin{itemize}
    \item When we pop the source $s$ from the heap, $d(s)=0$, and the distances $\dis(s,v)$ for all $v\in \Bundle(s)$ are found in the bundle construction step, and can be put in $d(v)$ in Step 1. 
   \item Assume Proposition~\ref{inv:other} holds for the first $i$ vertices popped, so the real distances for all vertices bundled to popped vertices are found after Step 1. By Step 2 and 3, we can see for all unpopped $u\in R$, the distance $\dis(s,u)$ can be found if the shortest path from $s$ to $u$ does not go through vertices bundled to other unpopped vertices in the heap. If the next popped vertex $u'$ does not satisfy this, let $y$ be the first vertex on the shortest path from $s$ to $u'$ which is bundled to an unpopped vertex $\bundle(y)$ other than $u'$, so $\dis(s,\bundle(y))$ can be found. By Lemma~\ref{lemma:R-vertex}, $\dis(s,\bundle(y))\leq \dis(s,u')$, so if $d(u')>\dis(s,u')$, $\bundle(y)$ will be the next popped vertex. Thus, $\dis(s,u')$ for the next popped vertex $u'$ is found before it is popped.
   \item If an unpopped vertex $u'\in R$ is updated in the iteration of popped vertex $u$, the new path to $u'$ must go through a vertex in $\Bundle(u)$. By Lemma~\ref{lemma:R-vertex}, $d(u')$ cannot be updated to be smaller than $\dis(s,u)$, so the unpopped vertices must have longer or equal distances than any popped vertex.
   \item Thus when popping a vertex $u\in R$, its distance $\dis(s,u)$ is already found. For all vertex $v\in\Bundle(u)$, if $\dis(s,v)$ is not directly obtained by $d(u)+\dis(u,v)$, that is, $\dis(s,v)<\dis(s,u)+\dis(u,v)$, let $x$ be the last vertex on the shortest path from $s$ to $v$ such that $\bundle(x)$ is popped before $u$, and let $y$ be the next vertex after $x$. We can see $\dis(s,\bundle(y))\geq \dis(s,u)$, so by Lemma~\ref{lemma:other}, we get such $z_1$ and $z_2$. Then from Proposition~\ref{inv:other} $\dis(s,x)$ can be found in Step 1 in the iteration of $\bundle(x)$, then $\dis(s,z_1)$ can be found in Step 2 of that iteration. In this iteration of $u$, $\dis(s,v)$ can be set to $\dis(s,z_1)+w_{z_1,z_2}+\dis(z_2,v)$ in Step 1, so Proposition~\ref{inv:other} still holds after this iteration.
\end{itemize}

\subsection{Proof of Correctness}
\label{sec:correctness-analysis}

We give a formal proof based on the pseudocode of Algorithm~\ref{alg:bundle-Dijkstra}. Define $u_i\in R$ as the vertex extracted in the $i$-th iteration of while-loop in Algorithm \ref{alg:bundle-Dijkstra}. Our key lemma in the following shows the main properties of the algorithm, therefore Bundle Dijkstra is correct no matter how $R$ is chosen.

\begin{lemma}
    \label{lem:shortest-path-iteration}
     The following properties hold for any $i\geq 1$ in Bundle Dijkstra (Algorithm \ref{alg:bundle-Dijkstra}):
    \begin{enumerate}
        \item When $u_i$ is extracted from the heap, $d(u_i) = \dis(s, u_i)$ holds. \label{prop:du-correct-before} 
        
        \item After $i$-th iteration of the while-loop, $d(u) \geq d(u_i)$ for all $u \in R \backslash \set{u_j}_{j\leq i}$. \label{prop:heap-pop-increase}
        
        \item After Step 1 of $i$-th iteration of the while-loop, $d(v)=\dis(s,v)$ for all $v \in \Bundle(u_i)$. \label{prop:dv-correct}
    \end{enumerate}
    
\end{lemma}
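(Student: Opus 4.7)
The plan is to proceed by strong induction on $i$, establishing Properties~\ref{prop:du-correct-before}, \ref{prop:dv-correct}, \ref{prop:heap-pop-increase} in that order within each inductive step. The base case $i=1$ is immediate: initially $d(s)=0$ and $d(v)=+\infty$ for $v\ne s$, so $u_1=s$ and $d(u_1)=\dis(s,u_1)=0$ verifies Property~\ref{prop:du-correct-before}; Step~1 then relaxes each $v\in\Bundle(s)$ via $d(s)+\dis(s,v)=\dis(s,v)$, so Property~\ref{prop:dv-correct} follows by Observation~\ref{obs:d-upper-bound}; and Property~\ref{prop:heap-pop-increase} is vacuous.

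For the inductive step I assume the three properties for all $j<i$ and first tackle Property~\ref{prop:du-correct-before} by contradiction. Suppose $d(u_i)>\dis(s,u_i)$ when $u_i$ is extracted; fix a shortest path $P=s=p_0,\dots,p_k=u_i$ and let $m$ be the largest index for which $\bundle(p_m)$ was popped in some earlier iteration. Such $m$ exists because $\bundle(p_0)=s$ was popped first, and $m<k$ because $\bundle(u_i)=u_i$ is still unpopped. By inductive Property~\ref{prop:dv-correct}, $d(p_m)=\dis(s,p_m)$ after the iteration of $\bundle(p_m)$, so Step~2 of that iteration relaxed $p_{m+1}\in N(p_m)$ so that $d(p_{m+1})\le\dis(s,p_{m+1})$, and (via Step~3 when $p_{m+1}\notin R$) relaxed $\bundle(p_{m+1})$ to at most $\dis(s,p_{m+1})+\dis(p_{m+1},u_i)=\dis(s,u_i)$, using $\dis(p_{m+1},\bundle(p_{m+1}))\le\dis(p_{m+1},u_i)$ (since $u_i\in R$) together with $p_{m+1}\in P$. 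If $\bundle(p_{m+1})=u_i$ this directly yields $d(u_i)\le\dis(s,u_i)$, contradicting the assumption; otherwise $\bundle(p_{m+1})$ is an unpopped $R$-vertex distinct from $u_i$ with $d$-value strictly below $d(u_i)$, contradicting the heap extract-min that just produced $u_i$. The sub-case $p_{m+1}\in R$ is handled analogously through the direct Step~2 relaxation.

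For Property~\ref{prop:dv-correct} I would fix $v\in\Bundle(u_i)$ with $v\ne u_i$ (the case $v=u_i$ follows from Property~\ref{prop:du-correct-before}) and a shortest path $s=q_0,\dots,q_k=v$. If $\dis(s,v)=\dis(s,u_i)+\dis(u_i,v)$ the relaxation on line~\ref{line:bundle-Dijkstra:update-uv} suffices; otherwise I let $x$ be the last vertex on this path whose bundle was popped before iteration $i$, $y$ the next vertex, and invoke Lemma~\ref{lemma:other}. Its precondition $\dis(s,\bundle(y))\ge\dis(s,u_i)$ holds trivially when $\bundle(y)=u_i$, and when $\bundle(y)\ne u_i$ is unpopped it reduces to an auxiliary sub-claim that every unpopped $R$-vertex has true distance at least $\dis(s,u_i)$, which I would prove by replaying the Property~\ref{prop:du-correct-before} contradiction template on such a hypothetical vertex. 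Lemma~\ref{lemma:other} then produces adjacent $z_1,z_2$ on the shortest subpath from $y$ to $v$ with $z_1\in\Ball(y)\cup\{y\}$ and $z_2\in\Ball(v)\cup\{v\}$; the inner $\Ball$-loop of Step~2 in the iteration of $\bundle(x)$ (line~\ref{line:bundle-Dijkstra:update-xz1}) already set $d(z_1)\le\dis(s,z_1)$ because $z_1$ lies on a shortest path through $y$, and the inner loop on line~\ref{line:bundle-Dijkstra:update-z1v} in Step~1 of iteration $i$ then relaxes $v$ to $d(z_1)+w_{z_1,z_2}+\dis(z_2,v)=\dis(s,v)$.

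Property~\ref{prop:heap-pop-increase} falls out once Property~\ref{prop:du-correct-before} for $i$ is in hand: every relaxation in iteration $i$ that touches an unpopped $u'\in R$ corresponds to a concrete path from $s$ through some $x\in\Bundle(u_i)$ to $u'$, and Lemma~\ref{lemma:R-vertex} applied to this path (with $x$ as the intermediate vertex and $u'\in R$ as the endpoint) lower-bounds its length, hence the update value, by $\dis(s,\bundle(x))=\dis(s,u_i)=d(u_i)$. The main obstacle is the auxiliary sub-claim used inside the Property~\ref{prop:dv-correct} argument: since the popping order of iterations $j>i$ is not yet established within the induction, one cannot appeal to future $d$-values, and the sub-claim must instead be settled by a self-contained contradiction argument that replays the Property~\ref{prop:du-correct-before} case analysis on a hypothetical unpopped $R$-vertex with smaller true distance than $u_i$.
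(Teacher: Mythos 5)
Your proposal is correct and follows the same overall induction as the paper, but the internal routing differs in a way worth noting. The paper proves the three properties in the order \ref{prop:du-correct-before}, \ref{prop:heap-pop-increase}, \ref{prop:dv-correct}: having Property \ref{prop:heap-pop-increase} in hand, its proof of Property \ref{prop:dv-correct} works entirely with $d$-values (using $d(u)\geq d(u_t)$ for $u=\bundle(y)$ and $d(u)\leq \dis(s,y)+\dis(y,u)$) and re-derives the $z_1,z_2$ construction inline rather than citing Lemma~\ref{lemma:other}. You instead prove \ref{prop:dv-correct} before \ref{prop:heap-pop-increase}, which forces you to supply the hypothesis of Lemma~\ref{lemma:other} via the auxiliary sub-claim that every $R$-vertex unpopped before iteration $i$ has true distance at least $\dis(s,u_i)$; your plan to prove this by replaying the Property \ref{prop:du-correct-before} contradiction (the successor $p_{m+1}$ of the last vertex with a popped bundle forces a heap key $\leq \dis(s,u)<\dis(s,u_i)=d(u_i)$ at the moment $u_i$ is extracted, contradicting extract-min or Observation~\ref{obs:d-upper-bound}) does go through, and it is essentially a formalization of the ``main ideas'' sketch in Section~3.2, trading the paper's $d$-value bookkeeping for a statement about true distances plus Lemma~\ref{lemma:other} as a black box. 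Both routes are sound; the paper's buys a slightly shorter Property \ref{prop:dv-correct} argument at the cost of carrying $d$-value monotonicity through the current iteration, while yours isolates the geometric content in Lemma~\ref{lemma:other}. One small patch you should add: Lemma~\ref{lemma:other} requires $y\neq v$, and your choice of $y$ (the successor of $x$) can equal $v$; in that degenerate case Line~\ref{line:bundle-Dijkstra:update-xy} of iteration $j$ already gives $d(v)\leq \dis(s,x)+w_{x,v}=\dis(s,v)$, so the conclusion holds directly. Likewise, your Property \ref{prop:heap-pop-increase} argument should state explicitly (as the paper does) that keys of unpopped $R$-vertices were already at least $d(u_i)$ at extraction time by the heap property and that $d(u_i)$ cannot decrease further, but these are trivial additions.
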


\begin{proof}
    We shall prove the lemma by induction on $i$.
    
    The lemma holds for $i = 1$ since $d(s) = 0$ and $d(v) = \dis(s, v)$ for all $v \in \Bundle(s)$ after Line \ref{line:bundle-Dijkstra:update-uv}.
    
    Suppose the lemma holds for every $i \leq t-1$, consider the case $i=t$.
    
    \begin{enumerate}
        \item Consider a shortest path $P$ from $s$ to $u_t$. Let $x$ be the last vertex on $P$ such that $x\in \Bundle(u_j)$ for some $j < t$, and $y$ be the next one after $x$, hence $y \in \Bundle(u)$ for some $u \in R \backslash \set{u_\ell}_{\ell < t}$. By Property \ref{prop:dv-correct} of induction hypothesis $d(x) = \dis(s, x)$ after Step 1 of $j$-th iteration. After that the algorithm updates $d(y)$ in Line \ref{line:bundle-Dijkstra:update-xy} since $y \in N(x)$, and further $d(u)$.
        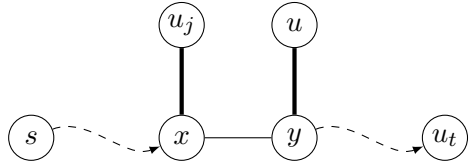
\begin{figure}[ht]
            \centering
            \begin{tikzpicture}
                \tikzstyle{mycircle}=[circle, minimum size = 6mm]
                \node [mycircle, draw, label=center:$s$] (s) at (-1.5, 0) {};
                \node [mycircle, draw, label=center:$x$] (x) at (0.5, 0) {};
                \node [mycircle, draw, label=center:$y$] (y) at (2, 0) {};
                \node [mycircle, draw, label=center:$u_j$] (uj) at (0.5, 1.5) {};
                \node [mycircle, draw, label=center:$u$] (u) at (2, 1.5) {};
                \node [mycircle, draw, label=center:$u_t$] (ut) at (4, 0) {};
                \draw [-latex, dashed] (s) .. controls (-0.75, 0.3) and (-0.25, -0.3) .. (x);
                \draw (x) -- (y);
                \draw [ultra thick] (x) -- (uj);
                \draw [ultra thick] (y) -- (u);
                \draw [-latex, dashed] (y) .. controls (2.75, 0.3) and (3.25, -0.3) .. (ut);
            \end{tikzpicture}
            \caption{Illustration of the shortest path from $s$ to $u_t$.}
        \end{figure}
        
        Therefore after $(t-1)$-th iteration $d(y) = \dis(s, x) + \dis(x, y) = \dis(s, y)$ and $d(u) \leq \dis(s, y) + \dis(y, u)$. Further:
        \begin{align*}
            d(u) \leq ~ & \dis(s, y) + \dis(y, u) \\
            \leq ~ &\dis(s, y) + \dis(y, u_t) \tag{$\bundle(y) = u$}\\
            = ~ &\dis(s, u_t) \tag{$y$ on shortest path}\\
            \leq ~ & d(u_t) \tag{Observation \ref{obs:d-upper-bound}}.
        \end{align*}

        On the other hand, the algorithm extracts $u_t$ from Fibonacci heap $H$ immediately after $(t-1)$-th iteration, thus $d(u_t) \leq d(u)$. So all the inequalities above should be equations, thus $d(u_t) = \dis(s, u_t)$.
        
        \item When executing Line \ref{line:bundle-Dijkstra-extractmin} of $t$-th iteration, $d(u) \geq d(u_t)$ holds for every $u \in R \backslash \set{u_j}_{j\leq t}$ since $H$ is a Fibonacci heap. 
        Suppose $d(u) < d(u_t)$ for some $u \in R \backslash \set{u_j}_{j\leq t}$ after $t$-th iteration. The further updates on $d(u)$ must start from $d(x)$ for some $x \in \Bundle(u_t)$. For last such update, applying Lemma \ref{lemma:R-vertex} on this path from $s$ to $x$ then to $u$, we have:
        \begin{align*}
            d(u) \geq ~ & \dis(s, u_t) \tag{Lemma \ref{lemma:R-vertex}}\\
            = ~ & d(u_t), \tag{Property \ref{prop:du-correct-before}}
        \end{align*}
        leading to contradiction.
        
        \item We want to show that $d(v) = \dis(s, v)$ holds for all $v\in \Bundle(u_t)$.
        Suppose there exists a vertex $v\in \Bundle(u_t)$ such that $d(v)>\dis(s,v)$ after Step 1. Denote $P$ as the shortest path from $s$ to $v$. Let $x$ be the last vertex on $P$ such that $x\in \Bundle(u_j)$ for some $j < t$, and $y$ be the next one after $x$ on $P$, hence $y \in \Bundle(u)$ for some $u \in R \backslash \set{u_\ell}_{\ell < t}$. By Property \ref{prop:dv-correct} of induction hypothesis, $d(x) = \dis(s, x)$ after Step 1 of $j$-th iteration. Same as above we can show that $d(y) = \dis(s, x) + \dis(x, y) = \dis(s, y)$ and $d(u) \leq \dis(s, y) + \dis(y, u)$ (where $u=\bundle(y)$) before $t$-th iteration.
        We have:
        \begin{align*}
            \dis(s, y) \geq ~ & d(u) - \dis(y, u).\\
            \dis(s, v) < ~ & d(v) \tag{Assumption}\\
            \leq ~ & d(u_t) + \dis(u_t, v) \tag{$d(v)$ updated in Line \ref{line:bundle-Dijkstra:update-uv}}.
        \end{align*}
        
        On the other hand, $d(u) \geq d(u_t)$ after $t$-th iteration by Property \ref{prop:heap-pop-increase}. Since $d(u_t)$ doesn't change (Property \ref{prop:du-correct-before} and Observation \ref{obs:d-upper-bound}), and $d(u)$ can only decrease in $t$-th iteration, so $d(u) \geq d(u_t)$ holds throughout $t$-th iteration. Hence:
        \begin{equation}
        \label{eq:hence-equation}
            \dis(y, v) = \dis(s, v) - \dis(s, y) < \dis(u_t, v) + \dis(y, u),
        \end{equation}
        while the equation holds since $y$ lies on the shortest path from $s$ to $v$.

        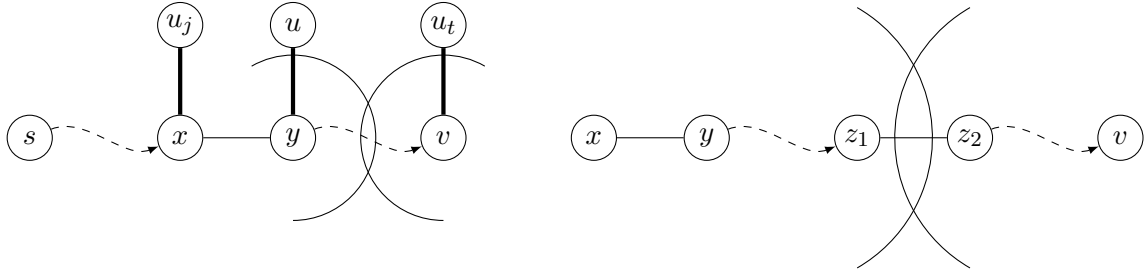
\begin{figure}[ht]
            \centering
            \begin{tikzpicture}
                \tikzstyle{mycircle}=[circle, minimum size = 6mm]
                \node [mycircle, draw, label=center:$s$] (s) at (-1.5, 0) {};
                \node [mycircle, draw, label=center:$x$] (x) at (0.5, 0) {};
                \node [mycircle, draw, label=center:$y$] (y) at (2, 0) {};
                \node [mycircle, draw, label=center:$u_j$] (uj) at (0.5, 1.5) {};
                \node [mycircle, draw, label=center:$u$] (u) at (2, 1.5) {};
                \node [mycircle, draw, label=center:$v$] (v) at (4, 0) {};
                \node [mycircle, draw, label=center:$u_t$] (ut) at (4, 1.5) {};
                \draw [-latex, dashed] (s) .. controls (-0.75, 0.3) and (-0.25, -0.3) .. (x);
                \draw (x) -- (y);
                \draw [ultra thick] (x) -- (uj);
                \draw [ultra thick] (y) -- (u);
                \draw [ultra thick] (v) -- (ut);
                \draw [-latex, dashed] (y) .. controls (2.75, 0.3) and (3.25, -0.3) .. (v);
                \draw (2, -1.1) arc(-90: 120: 1.1);
                \draw (4, -1.1) arc(270: 60: 1.1);
                
                \node [mycircle, draw, label=center:$x$] (x2) at (6, 0) {};
                \node [mycircle, draw, label=center:$y$] (y2) at (7.5, 0) {};
                \node [mycircle, draw, label=center:$z_1$] (z1) at (9.5, 0) {};
                \node [mycircle, draw, label=center:$z_2$] (z2) at (11, 0) {};
                \node [mycircle, draw, label=center:$v$] (v2) at (13, 0) {};
                
                \draw (x2) -- (y2);
                \draw [-latex, dashed] (y2) .. controls (8.25, 0.3) and (8.75, -0.3) .. (z1);
                \draw (z1) -- (z2);
                \draw [-latex, dashed] (z2) .. controls (11.75, 0.3) and (12.25, -0.3) .. (v2);
                \draw (9.5, 1.732) arc(60: -60: 2);
                \draw (11, 1.732) arc(120: 240: 2);
            \end{tikzpicture}
            \caption{Left: Illustration of the shortest path from $s$ to $v$ if $d(v) > \dis(s,v)$. Right: A closer look at path $x$ to $v$ if $\dis(y, v) \geq \dis(u_t, v)$.}
        \end{figure}
 
        Therefore there are two possible cases:
        \begin{itemize}
            \item $\dis(y, v) < \dis(u_t, v)$.
            
            In this case $y \in \Ball(v)$, so we can update $d(v)$ to $\dis(s,y)+\dis(y,v)$ on Line \ref{line:bundle-Dijkstra-update-yv}, contradicting to $d(v)>\dis(s,v)$.
            
            \item $\dis(y, v) \geq \dis(u_t, v)$.
            
            First, by Inequality \eqref{eq:hence-equation}, $\dis(y, u) > \dis(y, v) - \dis(u_t, v) \geq 0$. Let $z_1$ be the last vertex on path $P$ with $\dis(y, z_1) < \dis(y, u)$, we have $z_1 \in \Ball(y)$.
            Let $z_2$ be the next vertex on the path, then $\dis(y, z_2)\geq \dis(y, u)$, so $\dis(z_2,v) = \dis(y, v) - \dis(y, z_2) \leq \dis(y, v) - \dis(y, u) < \dis(u_t,v)$, that is, $z_2\in \Ball(v)$.
            (If $z_2$ does not exist, then $z_1=v$.)
            
            By Property \ref{prop:dv-correct} of induction hypothesis, $d(x) = \dis(s, x)$ just after Step 1 of $j$-th iteration, so $d(z_1)$ is updated to $\dis(s, z_1)$ in Line \ref{line:bundle-Dijkstra:update-xz1} of $j$-th iteration. Therefore $d(v)$ is updated to $\dis(s, v)$ in Line \ref{line:bundle-Dijkstra:update-z1v} of $t$-th iteration (since $j < t$), contradicting the assumption.
        \end{itemize}

        Therefore $d(v)=\dis(s,v)$ for all $v \in \Bundle(u_t)$ after Step 1 of $t$-th iteration.
    \end{enumerate}

\end{proof}

\begin{remark}\label{rmk:lower-bound}

Pettie and Ramachandran~\cite{PR05} proved that, any hierarchy-based SSSP algorithm on undirected graphs in comparison-addition model takes time at least $\Omega(m + \min\{n\log\log r, n\log n\})$, where $r$ is the ratio of the maximum-to-minimum edge weight. This bound becomes $\Omega(m+n\log n)$ when $r$ is exponentially large. Here the hierarchy-based algorithm is defined to generate a permutation $\pi_s$ satisfying \emph{hierarchy property:} $\dis(s, v)\geq \dis(s, u) + \sep(u, v) \Rightarrow \pi_{s}(u) < \pi_{s}(v)$, where $\sep(u, v)$ is the longest edge on the MST path between $u$ and $v$. The permutation $\pi_s$ is, though not defined to be, typically for the algorithms discussed in \cite{PR05}, the order that the algorithm visits the vertices. However, that is a worst-case lower bound, and our algorithm is randomized. Also the order that our algorithm visits the vertices \emph{does not} follow the hierarchy property: think of two vertices $x$ and $y$ are both connected to $u$ by edges $(x,u), (y,u)$ both with weight 1, and $x,y$ are both bundled to $u$. It is possible that $\sep(x,y)=1$ and $\dis(s, y)= \dis(s, x) + 2$ but we set no limit on the order we visit $x$ and $y$, that it is possible we visit $y$ before $x$. This explains why our algorithm can break this $\Omega(m + n\log n)$ lower bound in \cite{PR05}.

\end{remark}

\section{Improved Bundle Construction}
\label{sec:improved-construction-of-bundles}

In this section we propose an improved bundle construction that runs in $O(m\sqrt{\log n\cdot \log\log n})$ time with high probability.
In Section~\ref{sec:correctness-analysis} we showed that correctness of Bundle Dijkstra does not depend on the choice of $R$, as long as $\bundle(\cdot), \Ball(\cdot)$ and $\Bundle(\cdot)$ are correctly computed with respect to $R$. The running time for the bundle construction is $O(\sum_{v\in V\setminus R} \abs{S_v}\log \abs{S_v})$, and Bundle Dijkstra is $O(\sum_{v\in V\setminus R}\abs{\Ball(v)} + \abs{R}\log n)$.

Naturally, $\abs{S_v}$ is a random variable following geometric distribution for each vertex $v\in V$, and they are not independent since a vertex $x\in V$ may appear in several sets. However, for a subset $W \subseteq V$, if any vertex appears at most once in $\set{S_v}_{v\in W}$, the corresponding random variables $\set{\abs{S_v}}_{v\in W}$ are independent. By Lemma~\ref{lem:bounded-covariance-sum-distribution}, if each random variable is dependent to few other variables, their summation deviates from the expectation with exponentially small probability. So we manually include all those vertices with $\abs{S_v} \geq k\log k$ into $R$. In this way for each vertex in $V\setminus R$, its random variable is dependent to only a limited number of other ones, and we can bound their summation with high probability.

We introduce how to generate $R$ and compute $\{\bundle(v)\}_{v\in V\setminus R}$ below, as well as $\{\Ball(v)\}_{v\in V\setminus R}$, $\{\Bundle(u)\}_{u\in R}$ and $\dis(v, u)$ for $u\in \Ball(v)\cup\{\bundle(v)\}$. The pseudocode is given in Algorithm~\ref{alg:bundle-construction}. We still set parameter $k = \sqrt{\frac{\log n}{\log\log n}}$ as in Section~\ref{sec:algorithm}.

\paragraph{Improved Bundle Construction.}
\begin{itemize}
    \item Sample each vertex $v\in V\setminus \{s\}$ with probability $\frac{1}{k}$ to form set $R_1$ and add $s$ into $R_1$;
    \item For each $v\in V\setminus R_1$, run Dijkstra algorithm started from $v$, until we have extracted a vertex in $R_1$; or have already popped $k\log k$ vertices.
    \item In the former case, denote the extracted vertices in the order they appeared as list $V_{extract}^{(v)}$. Note that $V_{extract}^{(v)}$ is similar to $S_v$ of Section~\ref{sec:algorithm}.
    In the latter case, add $v$ into $R_2$;
    \item Set $R = R_1\cup R_2$, and for $v\in V\setminus R$, let the first vertex in $V_{extract}^{(v)}$ that lies in $R$ be $\bundle(v)$;
    \item With the results above, compute $\Bundle(u)$ for $u\in R$, $\Ball(v)$ for $v\in V\setminus R$, and record $\dis(v, u)$ for $u\in \Ball(v)\cup\{ \bundle(v)\}$. This step takes linear time.
\end{itemize}

The correctness of this bundle construction follows from the Dijkstra's algorithm~\cite{Dij59}. We only need to analyze $\abs{R}$, $\sum_{v\in V\setminus R}\abs{\Ball(v)}$ and its running time. The performance of this improved bundle construction is characterized in Lemma~\ref{lem:correctness-alg-truncated_dijkstra} below. By Lemma~\ref{lem:correctness-alg-truncated_dijkstra}, the bundle construction takes $O(mk\log k)$ time, and bundle Dijkstra takes $O(\sum_{v\in V\setminus R}\abs{\Ball(v)} + \abs{R}\log n) = O(mk+m\log n/k)$ with probability $1 - e^{-\Omega(n^{1 - o(1)})}$. Thus the total running time of our algorithm is $O(mk\log k + m\log n/k) = O(m\sqrt{\log n\cdot \log\log n})$ w.h.p. The proof of Lemma~\ref{lem:correctness-alg-truncated_dijkstra} is based on Lemma~\ref{lem:bounded-covariance-sum-distribution}.

\begin{algorithm}[ht]
    \caption{\textsc{BundleConstruction}$(G, s, k)$}
    \label{alg:bundle-construction}
	\SetKwInOut{Input}{Input}\SetKwInOut{Output}{Output}
	\Input{A graph $G=(V,E,w)$, the source vertex $s\in V$ and parameter $k$}
    \Output{All $R$, $\bundle(\cdot)$, $\Ball(\cdot)$, $\Bundle(\cdot)$ as described above}

    Initialize $R_1\gets\{s\}$, and insert each $v\in V\backslash\{s\}$ to $R_1$ independently with probability $\frac{1}{k}$\;
    Initialize $R_2 \gets \emptyset$\;
    \ForEach(\tcp*[f]{Truncated Dijkstra}){$v\in V\backslash R_1$}{
        Initialize Fibonacci heap $H^{(v)}$ with vertex $v$ and its key     $d^{(v)}(v) \gets 0$\;
        Initialize an ordered list $V^{(v)}_{extract}\gets ()$\;
        \While{$H^{(v)}$ is not empty}{
            $u\gets H.\textsc{ExtractMin}()$\;
            $V^{(v)}_{extract}.\textsc{Append}(u)$\;
            \lIf{$u\in R_1$}{quit the \textbf{while-loop}}
            \lElseIf{$\abs{V^{(v)}_{extract}} > k\log k$}{set $R_2 \gets R_2\cup \{v\}$ and quit the \textbf{while-loop}}
            \Else{
                \ForEach{$x\in N(u)$}{
                    \lIf{$x\notin H^{(v)}$ and $x\notin V^{(v)}_{extract}$}{$H^{(v)}.\textsc{Insert}(x, d^{(v)}(u)+w_{ux})$}
                    \lElseIf{$d^{(v)}(x) > d^{(v)}(u) + w_{ux}$}{$H.\textsc{DecreaseKey}(x, d^{(v)}(u) + w_{ux})$}
                }
            }
        }
    }
    $R \gets R_1\cup R_2$\;
    \ForEach{$v\in V\setminus R$}{
    $\bundle(v)\gets $ the first vertex in the ordered list $V^{(v)}_{extract}$ that lies in $R$\;
    }
    Compute $\{\Bundle(u)\}_{u\in R}$, $\{\Ball(v)\}_{v\in V\setminus R}$ and record $\dis(v, u)$ for $u\in \Ball(v)\cup \{\bundle(v)\}$\;
\end{algorithm}

\begin{lemma}
    \label{lem:correctness-alg-truncated_dijkstra}
    By running Algorithm~\ref{alg:bundle-construction}, with probability $1-e^{-\Omega(n^{1 - o(1)})}$, the following properties hold:
    \begin{enumerate}
        \item [(a)] $\abs{R} = O(\frac{m}{k})$.
        \item [(b)] $\sum_{v\in V\setminus R}\abs{\Ball(v)} = O(mk)$.
        \item [(c)] The running time of Algorithm~\ref{alg:bundle-construction} is $O(mk\log k)$.
    \end{enumerate}
\end{lemma}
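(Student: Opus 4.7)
The plan is to decompose $R = R_1 \cup R_2$ and bound each piece by combining an expectation calculation with the concentration statement of Lemma~\ref{lem:bounded-covariance-sum-distribution}, and then deduce (b) and (c) from a parallel bound on the truncated search sizes $T_v := \abs{V^{(v)}_{extract}}$. The key observation is that the order in which vertices are extracted by the Dijkstra instance started from $v$ depends only on the (fixed) edge weights, not on the random set $R_1$; consequently the first $k\log k + 1$ potential extractions from $v$ form a \emph{deterministic} set $A_v \subseteq V$ with $\abs{A_v} \leq k\log k + 1$, and both $\mathbf{1}[v \in R_2]$ and $T_v$ are functions only of the sampling bits $\{B_u : u \in A_v\}$. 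This is the locality structure to be fed into Lemma~\ref{lem:bounded-covariance-sum-distribution}.

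For part (a), $|R_1| = 1 + \mathrm{Bin}(n-1, 1/k)$, and a standard multiplicative Chernoff bound yields $|R_1| = O(m/k)$ with probability $1 - e^{-\Omega(n/k)}$. For $R_2$: $v \in R_2$ requires all $k\log k + 1$ potential extractions in $A_v$ to miss $R_1$, so $\Pr[v \in R_2] \leq (1 - 1/k)^{k\log k} \leq 1/k$, giving $\mathbb{E}[|R_2|] \leq n/k$. Applying Lemma~\ref{lem:bounded-covariance-sum-distribution} to $|R_2| = \sum_v \mathbf{1}[v \in R_2]$ with the locality structure above yields $|R_2| = O(n/k)$ with probability $1 - e^{-\Omega(n^{1-o(1)})}$. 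Combining the two bounds and using $m \geq n - 1$ gives (a).

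For part (b), $\abs{\Ball(v)} \leq T_v$ for $v \notin R$, and for $v \notin R_1$ the variable $T_v$ is stochastically dominated by a $\mathrm{Geometric}(1/k)$ capped at $k\log k + 1$, so $\mathbb{E}[T_v] = O(k)$ and $\mathbb{E}\bigl[\sum_v T_v\bigr] = O(nk) = O(mk)$. Each $T_v$ again depends only on the bits $\{B_u : u \in A_v\}$ and is bounded by $k\log k + 1$, so a second application of Lemma~\ref{lem:bounded-covariance-sum-distribution} gives $\sum_{v \in V \setminus R}\abs{\Ball(v)} = O(mk)$ with probability $1 - e^{-\Omega(n^{1-o(1)})}$. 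For part (c), each truncated Dijkstra from $v$ performs $O(T_v)$ heap operations costing $O(\log T_v) = O(\log(k\log k)) = O(\log k)$ each (using constant degree for the neighbor relaxations), so the total running time is $O(\log k \cdot \sum_{v \notin R_1} T_v) = O(mk\log k)$ on the event of (b). A union bound over the three concentration events preserves the claimed probability.

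The main obstacle will be verifying that the locality of these random variables is sharp enough to drive Lemma~\ref{lem:bounded-covariance-sum-distribution} to the failure probability $e^{-\Omega(n^{1-o(1)})}$. Concretely, one must bound how many variables any single sampling bit $B_u$ can affect, i.e.\ the reverse count $\abs{\{v : u \in A_v\}}$. The forward bound $\abs{A_v} \leq k\log k + 1$ is immediate, but the reverse count requires a more delicate triangle-inequality argument exploiting the constant-degree assumption on $G$: if $u$ is among the $k\log k$ nearest vertices to $v$ then the distance-$\dis(u,v)$ ball around $v$ contains fewer than $k\log k$ vertices, and this should be enough to cap the reverse count at $\polylog n$ as well. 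Executing this step cleanly, rather than invoking it as a black box, is the main technical work needed to push the expectation bounds to the claimed high-probability guarantees.
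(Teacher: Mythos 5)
Your overall plan is the same as the paper's: split $R=R_1\cup R_2$, use Chernoff for $R_1$, set up the indicator $\mathbf{1}[v\in R_2]$ and the truncated search size ($Y_v$ in the paper, your $T_v$) as bounded random variables that depend only on the sampling bits inside a deterministic set of at most $k\log k$ candidate extractions, feed this locality into Lemma~\ref{lem:bounded-covariance-sum-distribution}, and then read off (a), (b), and derive (c) from (b) via $O(\log k)$ per heap operation. All of that matches the paper. The problem is the step you yourself flag as unexecuted: bounding how many of these candidate sets any one vertex can belong to. This is not a routine verification you can defer --- it is the heart of the proof --- and the route you sketch for it would not work. The reverse count $\abs{\{v: u\in A_v\}}$ cannot be capped at $\polylog n$ by a triangle-inequality/ball-volume argument: in a bounded-degree graph it can genuinely be $2^{\Theta(k\log k)}$. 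For example, take a binary tree of depth $d=k\log k-1$ rooted at $u$ with the edge at depth $i$ having weight $2^i$; then for every one of the $2^{d}$ leaves $v$, the $d$ nearest vertices to $v$ are exactly its ancestors, so $u\in A_v$ for all of them. Since $2^{\Theta(k\log k)}=2^{\Theta(\sqrt{\log n\log\log n})}$ exceeds any $\polylog n$, your target bound is false as stated.

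The paper's actual argument is both simpler and weaker, and the weaker bound suffices. If $w\in V_{full}^{(v)}$ then $w$ is reachable from $v$ through at most $k\log k$ extracted vertices, hence within $k\log k$ hops; by the degree-$3$ property the number of $v$ within $k\log k$ hops of $w$ is at most $3\cdot 2^{k\log k}$, so each $v$ has at most $T=3k\log k\cdot 2^{k\log k}=O(n^{o(1)})$ other vertices $u$ with $V_{full}^{(v)}\cap V_{full}^{(u)}\neq\emptyset$ (note Lemma~\ref{lem:bounded-covariance-sum-distribution} wants this pairwise intersection count, not the per-bit reverse count directly, though the conversion is immediate). Plugging $T=n^{o(1)}$, $b\leq k\log k$, $\mu=\Theta(1/k)$ or $\Theta(k)$ into the lemma gives failure probability $8Tb\mu^{-1}e^{-\mu^3\abs{S}/(8b^3T)}=e^{-\Omega(m/n^{o(1)})}=e^{-\Omega(n^{1-o(1)})}$, exactly the claimed guarantee; a $\polylog$ overlap bound is not needed because the exponent only has to be $n^{1-o(1)}$, and that tolerance for an $n^{o(1)}$ overlap is precisely why the hop-counting argument (and the choice $k=\sqrt{\log n/\log\log n}$, which keeps $2^{k\log k}=n^{o(1)}$) is enough. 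With that substitution for your ``main obstacle'' paragraph, the rest of your write-up goes through as in the paper.
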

\begin{proof}
    First, each vertex of $V\setminus \{s\}$ is inserted to $R_1$ independently with probability $\frac{1}{k}$, so by Chernoff bound, with probability $1 - O(e^{-m/k}) = 1 - e^{-\Omega(n^{1 - o(1)})}$, $\abs{R_1} = \Theta(m/k)$, and meanwhile $m' := \abs{V\setminus R_1} = \Theta(m)$.

    For each vertex $v\in V\setminus R_1$, define $X_v = \mathbb{I}[v\in R_2]$ and $Y_v = \abs{V_{extract}^{(v)}}$. Then, $X_v$ is a Bernoulli random variable, $X_v\in [0, 1]$ with probability $1$ and $\mathbb{E}[X_v] = (1 - \frac{1}{k})^{k\log k} = \Theta(\frac{1}{k})$. And $Y_v$ is a geometric random variable except its value is zero when $X_v=1$, so $Y_v\in [0, k\log k]$ with probability $1$ and $\mathbb{E}[Y_v] = k - (k+k\log k)(1 - \frac{1}{k})^{k\log k} = \Theta(k)$.\footnote{Detailed calculation: $\mathbb{E}[Y_v] = k - \sum_{i=k\log k+1}^{+\infty}\frac{1}{k}(1-\frac{1}{k})^{i-1}\cdot i = k - (1-\frac{1}{k})^{k\log k}\sum_{i=1}^{+\infty}\frac{1}{k}(1-\frac{1}{k})^{i-1}\cdot(i+k\log k)=k-(1-\frac{1}{k})^{k\log k}(k + k\log k)$. Noticing that $(1-\frac{1}{k})^{k\log k}\leq 1/k$, so $k - 1 - \log k \leq \mathbb{E}[Y_v] \leq k$ and $\mathbb{E}[Y_v] = \Theta(k)$.}

    For each vertex $v\in V\setminus R_1$, denote $V_{full}^{(v)}$ as the first $k\log k$ vertices extracted in the Dijkstra algorithm if it did not truncate. They are determined by the structure of $G$, so there is no randomness in $V_{full}^{(v)}$. The values of $X_v$ and $Y_v$ are determined by whether vertices in $V_{full}^{(v)}$ were inserted into $R_1$. Therefore, if $V_{full}^{(v_1)}, V_{full}^{(v_2)}, \cdots, V_{full}^{(v_j)}$ are disjoint, then $X_{v_1}, X_{v_2}, \cdots, X_{v_j}$ are independent, and similarly, $Y_{v_1}, Y_{v_2}, \cdots, Y_{v_j}$ are independent.
    
    For each vertex $w\in V_{full}^{(v)}$, because $w$ is found by $v$ within $k\log k$ steps of Dijkstra's algorithm, there must exist a path from $v$ to $w$ of no more than $k\log k$ edges, so by constant degree property, there are at most $3\cdot(1 + 2 + \cdots + 2^{k\log k - 1})\leq 3\cdot 2^{k\log k}$ different $u$ such that $w\in V_{full}^{(u)}$.
    Hence, for each $v$, there are at most $3k\log k \cdot 2^{k\log k}= O(n^{o(1)})$ different $u\in V\setminus R_1$ such that $V_{full}^{(v)}\cap V_{full}^{(u)} \neq \emptyset$.

    To apply Lemma~\ref{lem:bounded-covariance-sum-distribution}, for each $v\in R_1$, also define $X_v$, $Y_v$ and $V_{full}^{(v)}$ in the same way as if $v$ is executed in the loop of Algorithm~\ref{alg:bundle-construction}.

    Now, we apply Lemma~\ref{lem:bounded-covariance-sum-distribution} for $\{X_v\}_{v\in V}$ and $\{Y_v\}_{v\in V}$. For $\{X_v\}_{v\in V}$, $S$ is $V$ and $\abs{V} = m$, $\mu = \Theta(\frac{1}{k})$, $b = 1$, $T = O(n^{o(1)})$, and $\{W_v\}_{v\in V}$ are $\{ V_{full}^{(v)}\}_{v\in V}$, and we can verify that $8Tb\mu^{-1} = O(n^{o(1)})$ and $8b^3T/\mu^3 = O(n^{o(1)})$, so with probability at least $1 - e^{-\Omega(m/n^{o(1)})}$, it holds that $\sum_{v\in S}X_v = \Theta(m/k)$. And for $\{Y_v\}_{v\in V}$, $S$ is $V$, $\mu = \Theta(k)$, $b = k\log k$, $T = O(n^{o(1)})$, and $\{W_v\}_{v\in V}$ are also $\{ V_{full}^{(v)}\}_{v\in V}$, and similarly we infer that with probability $1 - e^{-\Omega(m/n^{o(1)})}$, it holds that $\sum_{v\in S}Y_v = \Theta(mk)$. Thus, we conclude that with probability $1 - e^{-\Omega(n^{1 - o(1)})}$, $\sum_{v\in V\setminus R_1}X_v = \Theta(m/k)$ and $\sum_{v\in V\setminus R_1} Y_v = \Theta(mk)$.

    Then, we prove the three claims of this lemma.

    For $(a)$, by definition $\abs{R} = \abs{R_1} + \abs{R_2}$, so by union bound, with probability $1 - e^{-\Omega(n^{1 - o(1)})}$, $\abs{R} = \abs{R_1}+ \sum_{v\in V\setminus R_1}X_v=O(\frac{m}{k})$.

    For $(b)$, by definition $\abs{\Ball(v)} \leq Y_v$, so $\sum_{v\in V\setminus R}\abs{\Ball(v)}\leq \sum_{v\in V\setminus R_1}Y_v$. Thus, with probability $1 - e^{-\Omega(n^{1 - o(1)})}$, $\sum_{v\in V\setminus R}\abs{\Ball(v)} = O(mk)$.

    For $(c)$, we count the total time for the truncated Dijkstra algorithm in all iterations. For each vertex $v\in V\setminus R_1$, by constant degree property, the number of $\textsc{Insert}$ operations is $O(Y_v)$, so $\abs{H^{(v)}} = O(Y_v) = O(k\log k)$. Therefore, each $\textsc{ExtractMin}$ operation takes time $O(\log(Y_v)) = O(\log k)$, and every other operation takes constant time. Thus the truncated Dijkstra algorithm of $v$ takes time $O(Y_v\log k)$. Thus, with probability $1 - e^{-\Omega(n^{1 - o(1)})}$, the total time of Algorithm~\ref{alg:bundle-construction} is $O(\sum_{v\in V\setminus R_1}Y_v\log k)=O(mk\log k)$.
\end{proof}

\begin{lemma} (Similar arguments as in~\cite{Janson})
    \label{lem:bounded-covariance-sum-distribution}
    Suppose a set of random variables $\{Z_v\}_{v\in S}$ satisfy that for each $v\in S$, $\mathbb{E}[Z_v] = \mu$, $Z_v\in[0, b]$ with probability 1, and each $Z_v$ is corresponded to a fixed deterministic set $W_v$ such that, if $W_{v_1}, W_{v_2}, \cdots, W_{v_j}$ are disjoint, then $Z_{v_1}, Z_{v_2}, \cdots, Z_{v_j}$ are independent, and $W_v$ intersects with at most $T$ different other $W_u$.

    Then, with probability at least $1 - 8Tb\mu^{-1}\cdot e^{-\frac{\mu^3\abs{S}}{8b^3T}}$, it holds that $\sum_{v\in S}Z_v = \Theta(\abs{S}\mu)$.
\end{lemma}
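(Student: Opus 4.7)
The plan is to follow the classical dependency-graph coloring technique for sums of partly dependent random variables (in the spirit of Janson's work cited alongside the lemma). The key observation is that the hypothesis on the sets $W_v$ furnishes an explicit dependency graph of bounded maximum degree, which can then be split into independent ``color classes'' on each of which standard Chernoff-type estimates apply.

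First I would form the dependency graph $D$ on vertex set $S$ by connecting $u$ and $v$ whenever $W_u \cap W_v \neq \emptyset$. By hypothesis every vertex of $D$ has degree at most $T$, so $D$ is $(T+1)$-colorable; a greedy coloring yields disjoint color classes $C_1, \ldots, C_{T+1}$ partitioning $S$. Within any single class $C_i$, no two vertices are adjacent in $D$, so the corresponding sets $W_v$ are pairwise disjoint, and the hypothesis of the lemma then forces $\{Z_v\}_{v \in C_i}$ to be mutually independent.

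Next, for each color class $C_i$ I would apply a standard multiplicative Chernoff bound (or a Bernstein-type refinement) to the independent sum $\sum_{v \in C_i} Z_v$ of $[0,b]$-valued random variables with mean $\mu$, obtaining concentration within a constant factor of $|C_i|\mu$ with tail of the form $\exp(-\Omega(|C_i|\mu^3/b^3))$; taking the Chernoff deviation parameter of order $\mu/b$ (rather than a constant) is what produces the cubic exponent actually claimed, and is in particular the looseness that keeps the calculation clean. A union bound over the at most $T+1$ color classes, combined with separate handling of classes whose size falls below a threshold (whose total contribution to $\sum_v Z_v$ can be bounded deterministically using $Z_v \le b$ and absorbed into the $\Theta(|S|\mu)$ slack), converts this into a single tail bound of the claimed form $1 - 8Tb\mu^{-1} \exp(-\mu^3|S|/(8b^3 T))$. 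The pre-factor $Tb/\mu$ reflects the number of color classes together with the size-regime bookkeeping.

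The main obstacle I expect is not conceptual but the constant tracking: the $\mu^3/b^3$ scaling in the exponent is weaker than the $\mu/b$ that falls out of standard Chernoff directly, so it has to come from tuning the Chernoff parameter to match the precise form of the stated inequality, and small color classes must be treated carefully so that their cumulative contribution does not destroy the $\Theta(|S|\mu)$ conclusion. I would carry the leading constants through coloring, per-class Chernoff, and union bound, and verify that they combine into exactly the $8$'s and $8T$'s appearing in the lemma.
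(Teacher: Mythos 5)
Your proposal is correct and is, at its core, the same Janson-style argument as the paper: decompose $S$ into subfamilies on which the $Z_v$ are mutually independent (pairwise disjointness of the $W_v$ within a class is exactly what the hypothesis needs), apply Hoeffding/Chernoff to each subfamily, handle a small residual part deterministically via $Z_v\le b$, and finish with a union bound. The difference is in how the decomposition is built. You properly $(T{+}1)$-color the dependency graph, getting at most $T+1$ classes of uncontrolled sizes, and therefore need the size-threshold trick: classes of size below roughly $p_0=\frac{|S|\mu}{4Tb}$ are absorbed deterministically (their total size is at most $(T{+}1)p_0\le\frac{|S|\mu}{2b}\le\frac{|S|}{2}$, contributing at most $\frac{|S|\mu}{2}$), while large classes give tails $2e^{-p_0\mu^2/(2b^2)}=2e^{-\mu^3|S|/(8Tb^3)}$ and a union bound over at most $T+1$ classes. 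The paper instead fixes the group size $p=\frac{|S|\mu}{4Tb}$ in advance and greedily extracts groups of exactly $p$ pairwise-disjoint variables, leaving one leftover group of size at most $p(T{+}1)$; this makes the per-group exponent uniform without any thresholding, at the cost of a union bound over $q\le 4Tb/\mu$ groups, which is precisely where the stated prefactor $8Tb\mu^{-1}$ comes from. Your version actually yields a prefactor $O(T)$, which is at least as good since $\mu\le b$, so the constants do close. One small correction to your narrative: the cubic $\mu^3/b^3$ in the exponent does not come from shrinking the Chernoff deviation parameter to order $\mu/b$ (that would give a $\mu^4/b^4$ exponent and is unnecessary); it comes from combining the quadratic $\mu^2/b^2$ of Hoeffding at constant relative deviation with the class-size threshold $p_0\sim|S|\mu/(Tb)$ (the paper's $p$), exactly as in your bookkeeping plan. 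This is a phrasing slip, not a gap.
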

\begin{proof}
We try to partition $\{Z_v\}_{v\in S}$ into several subsets $\{\mathcal{Z}_t\}$ such that all $Z_v$'s in each $\mathcal{Z}_t$ are independent so that we can apply Hoeffding's inequality, or the size of $\mathcal{Z}_t$ is small so that we can bound them by the upper bound $b$, and finally combine everything by the union bound. Also note that we do not need to actually compute $\{\mathcal{Z}_t\}$, as they are merely introduced for this mathematical proof.

Fix parameter $p = \frac{\abs{S}\mu}{4Tb}$. Since each $W_v$ intersects with at most $T$ different other $W_u$, whenever there are at least $p(T+1)$ elements in $\{Z_v\}_{v\in S}$, we can pick $p$ different $Z_v$ from them whose $W_v$ are disjoint, so that they are independent: pick an arbitrary $Z_v$ and discard those $Z_u$ if $W_u\cap W_v\neq \emptyset$; since there are at most $T$ such $Z_u$ different from $Z_v$, every time we discard at most $T+1$ elements. So from $p(T+1)$ elements we can pick $p$ of them.
We let them form a $\mathcal{Z}_t$ and remove them from $\{Z_v\}_{v\in S}$. Repeating this process we will end up with a partition $\{\mathcal{Z}_1, \mathcal{Z}_2, \cdots, \mathcal{Z}_q, \mathcal{Z}_{q+1}\}$ of $\{Z_v\}_{v\in S}$ such that: $\abs{\mathcal{Z}_t} = p$, and all $Z_v\in \mathcal{Z}_t$ are independent for $1\leq t\leq q$; $\abs{\mathcal{Z}_{q+1}} \leq p(T+1)\leq 2pT = \frac{\mu}{2b}\abs{S}$. By definition $\mu \leq b$, so $\abs{\mathcal{Z}_{q+1}}\leq \frac{1}{2}\abs{S}$.

Then by Hoeffding's inequality, for each $1\leq t\leq q$, 
\[\Pr\left[\abs{\sum_{v\in \mathcal{Z}_t}Z_v - \abs{\mathcal{Z}_t}\mu} \geq \frac{1}{2}\abs{\mathcal{Z}_t}\mu\right] \leq 2e^{-\frac{2(\frac{1}{2}\abs{\mathcal{Z}_t}\mu)^2}{\abs{Z_t}b^2}} = 2e^{-\frac{\mu^2p}{2b^2}}. \]
and $0\leq \sum_{v\in \mathcal{Z}_{q+1}}Z_ v\leq \abs{\mathcal{Z}_{q+1}}b$ with probability $1$.

By union bound, with probability at least $1 - 2qe^{-\frac{\mu^2p}{2b^2}}$,
\[\sum_{v\in S}Z_v \geq \frac{1}{2}\sum_{t=1}^q\abs{\mathcal{Z}_t}\mu = \frac{1}{2}(\abs{S} - \abs{\mathcal{Z}_{q+1}})\mu \geq \frac{1}{2}\left(\abs{S}-\frac{1}{2}\abs{S}\right)\mu = \frac{1}{4}\abs{S}\mu,\]
and meanwhile
\[\sum_{v\in S}Z_v\leq \frac{3}{2}\sum_{t=1}^q\abs{\mathcal{Z}_t}\mu + \abs{\mathcal{Z}_{q+1}}b\leq \frac{3}{2}\abs{S}\mu + \frac{\mu}{2b}\abs{S}\cdot b = 2\abs{S}\mu.\]

And from $\abs{S} \geq \sum_{t=1}^q\abs{\mathcal{Z}_t} \geq qp$, we conclude that $q \leq \abs{S}/p = 4Tb/\mu$. Thus, with probability at least $1 - 8Tb\mu^{-1}e^{-\frac{\mu^3\abs{S}}{8b^3T}}$, it holds that $\sum_{v\in S}Z_v = \Theta(\abs{S}\mu)$.
\end{proof}

\section{Discussion}\label{sec:disussion}
We gratefully acknowledge an anonymous reviewer for suggesting that constant-degree is not a necessary condition for this algorithm, so we can get improved time complexity when $m=\omega(n)$ and $m=o(n\log n)$. Instead of making the graph of degree 3, we use similar methods to split the vertices of degree $>m/n$ to vertices of degrees $\leq m/n$, so that the number of vertices is still $O(n)$. Then in each step:
\begin{itemize}
    \item In bundle construction, the time for Dijkstra search for every vertex $v$ will become $O(|S_v|\cdot \frac{m}{n}+|S_v|\log (|S_v|\cdot\frac{m}{n}))$, since the size of the heap is at most $|S_v|\cdot\frac{m}{n}$, so in total $O(mk+nk\log(mk/n))$.
    \item The time for Bundle Dijkstra will become $O(\frac{n}{k}\log n+ mk)$, since the number of vertices $z_1$ in Step 1 for every $v$ is $O(\frac{m}{n}|\Ball(v)|)$, and the number of vertices $z_1$ in Step 2 for every $y$ is $O(|\Ball(y)|)$ but each vertex appears $O(m/n)$ times as $y$ in Step 2.
    \item When $m/n=o(\log n)$, one can check that the analysis of independence in Section~\ref{sec:improved-construction-of-bundles} still works, since the number of different $u\in V\setminus R_1$ which have $V_{full}^{(v)}\cap V_{full}^{(u)} \neq \emptyset$ for each $v\in V\setminus R_1$ is still $O(n^{o(1)})$. 
\end{itemize}

Thus, the time complexity for this algorithm is $O(\frac{n}{k}\log n+ mk + nk\log(mk/n))$. When $m<n\log\log n$, $k$ still equals to $\sqrt{\frac{\log n}{\log \log n}}$, and the time bound is $O(n\sqrt{\log n\log\log n})$. When $n\log\log n\leq m< n\log n$, let $k=\sqrt{\frac{n}{m}\log n}$, and the time bound will be $O(\sqrt{mn\log n})$.

\bibliographystyle{plainnat}
\bibliography{reference}


\end{document}